\newtheorem{theorem}{Theorem}[section]
\newtheorem{lemma}[theorem]{Lemma}
\newtheorem{corollary}[theorem]{Corollary}
\newtheorem{claim}[theorem]{Claim}
\theoremstyle{definition}
\newtheorem{definition}[theorem]{Definition}
\newtheorem{condition}[theorem]{Condition}
\providecommand{\keywords}[1]{\textbf{\textit{Keywords:}} #1}
\providecommand{\jel}[1]{\textbf{\textit{JEL Classifications:}} #1}
\begin{document}
\title{In Simple Communication Games, When Does \textit{Ex Ante} Fact-Finding Benefit the Receiver?}
\author{Mark Whitmeyer\thanks{Department of Economics, University of Texas at Austin \newline Email: \href{mailto:mark.whitmeyer@utexas.edu}{mark.whitmeyer@utexas.edu}. Thanks to Rosemary Hopcroft, Vasudha Jain, Meg Meyer, Alex Teytelboym, Thomas Wiseman, and Joseph Whitmeyer for their comments and suggestions. This paper contains material that had previously been in the job market paper, ``Bayesian Elicitation''.}}
\date{\today{}}
\maketitle
\begin{abstract}
Always, if the number of states is equal to two; or if the number of receiver actions is equal to two and \begin{enumerate}[i.]
    \item The number of states is three or fewer, or 
    \item The game is cheap talk, or
    \item There are just two available messages for the sender.
\end{enumerate}
A counterexample is provided for each failure of these conditions.
\end{abstract}
\keywords{Cheap Talk, Costly Signaling, Information Acquisition, Information Design}\\
\jel{C72; D82; D83}

\newpage

\section{Introduction}\label{intro}
\setlength{\epigraphwidth}{2.3in}
\epigraph{A little learning is a dangerous thing.}{Alexander Pope\\ \textit{An Essay on Criticism}}

This is a paper on the value of information. The setting is a two player sender-receiver, signaling, or communication, game. There is an unknown state of the world, about which the receiver is uninformed. The receiver is faced with a decision problem but has no direct access to information about the state. Instead, there is an informed sender, who, after learning the state, chooses a (possibly costly) action (message)\footnote{Throughout, in order to distinguish the sender's action from the receiver's action, the sender's action is termed a message. In some settings, like cheap talk games, this moniker is literal. In some settings, like e.g. the classic Spence scenario, in which the sender chooses a level of education, message is less fitting as a label. Hence, the reader should keep in mind that the message is simply the receiver's action.}, which the receiver observes. 

We make a minimal number of assumptions. Each player has a von Neumann-Morgenstern utility function that may depend on the message chosen by the sender, the state of the world, and the action chosen by the receiver. Thus, these games include cheap talk games as in Crawford and Sobel (1982) \cite{cs}, and signaling games as in Spence (1978) \cite{spence} or Cho and Kreps (1986) \cite{cho}. Throughout we assume that the number of states, messages, and actions are finite.

In this setting, we pose a simple question. Is the receiver's maximal equilibrium payoff convex in the prior? That is, restricting attention to the equilibrium that maximizes the receiver's expected payoff, does \textit{ex ante} learning always benefit the receiver? If not, then are there conditions that guarantee this convexity?

We show that the answer to the first question is no: the receiver's maximal equilibrium payoff is not generally convex in the prior. However, there are broad conditions that guarantee convexity. If the game is \hyperlink{simple}{simple}--the sender's message has only instrumental value to the receiver--then the receiver's payoff is convex in the prior provided either
\begin{enumerate}
    \item There are at most two states; or
    \item The receiver has at most two actions and 
    \begin{enumerate}[i.]
        \item The game is cheap talk; or
        \item There are at most two messages; or
        \item There are at most three states.
    \end{enumerate}
\end{enumerate}
If there are three or more messages, four or more states, and the game is not cheap talk, then even if the game is simple and the receiver has just two actions, the receiver's payoff may fail to be convex in the prior. Moreover, if there are three or more states and the receiver has three or more actions, then the receiver's payoff may fail to be convex in the prior, even if the game is simple and cheap talk with transparent motives (a cheap talk game in which each sender has identical preferences over the action chosen by the receiver). Furthermore, if the game is non-simple then the receiver's payoff may fail to be convex in the prior, even if there are just two states and two actions. 

Why is the receiver's payoff convex in those scenarios described above? Why may the payoff fail to be convex otherwise? There is a crucial trade-off that belongs to \textit{ex ante} information acquisition: there is an initial gain in information that, all else equal, benefits the receiver. However, all else may not be equal: the initial learning may result in a belief at which the receiver-optimal equilibrium may be quite bad for the receiver. Hence, the two effects may have opposite effects on the receiver's welfare, in which case the magnitude of each effect determines whether learning is beneficial.

The conditions described above guarantee that the first effect dominates--even if the resulting beliefs after learning lead to worse equilibria for the receiver, her welfare loss is guaranteed to be less than the welfare gain from the information acquisition itself. If the conditions do not hold, then the first effect may not dominate. Even though the receiver gains information initially, the resulting equilibria may be so bad that the receiver may strictly prefer not to learn.

Thanks to the ubiquity of communication games, there are numerous interpretations of \textit{ex ante} information acquisition. In the Spence (1978) \cite{spence} setting, this paper's question becomes, ``when does any test (prior to the sender's education choice) benefit the hiring firm(s)?'' A seminal paper in finance is Leland and Pyle (1977) \cite{lp}, who explore an entrepreneur signaling through his equity retainment decision. There, ``when does any background information or access to the entrepreneur's history benefit a prospective investor?'' In a political economy setting in which an incumbent signals through his policy choice (see e.g. Angeletos, Hellwig, and Pavan 2006, and Caselli, Cunningham, Morelli, and de Barreda 2014) \cite{pavan, ines}, we ask, ``when does any initial news article benefit a representative member of the populace?''

More applications of \textit{ex ante} information acquisition include reports about the state of the economy, in the case of a central bank signaling through its monetary policy (Melosi 2016) \cite{mel}; product reviews, in the case of a firm signaling through advertising (Nelson 1974, and Milgrom and Roberts 1986) \cite{nelson, mil}, or through its warranty offer (Gal-Or 1989) \cite{gal}; and financial reports or audits, in the case of a firm signaling through dividend provision (Bhattacharyya 1980) \cite{bat}.

The remainder of Section \ref{intro} discusses related work, and Section \ref{model} describes the formal model. Sections \ref{staypositive} and \ref{negative} contain the main results of the paper, Theorems \ref{main1} and \ref{main2}, which provide sufficient conditions for convexity and show that the receiver's payoff may not be convex should those conditions not hold, respectively. Section \ref{conclusion} concludes.

\subsection{Related Work}

One way to rephrase this paper's research question is, ``if information is free prior to a communication game, then does it benefit the receiver in expectation to acquire it?" Ramsey (1990) \cite{ram} asks this question in the context of a decision problem and answers in the affirmative, and this result also follows from Blackwell (1951, 1953) \cite{blackwell, blackwell2} among many others.

There are a number of papers that investigate the value of information in strategic interactions (games). Neyman (1991) \cite{ney} shows that information can only help a player in a game if other players are unaware that she has it. Kamien, Tauman, and Zamir (1990) \cite{kamien} explore an environment in which an outside agent, ``the Maven'', possesses information relevant to an $n$-player game in which he is not a participant. There they look at the outcomes that the maven can induce in the game and how (and for how much) the maven should sell the information. Bassan, Gossner, Scarsini, and Zamir (2003) \cite{bass} establish necessary and sufficient conditions for the value of information to be socially positive in a class of games with incomplete information. 

In two-player (simultaneous-move) Bayesian games, Lehrer, Rosenberg, and Shmaya (2013) \cite{sh2} forward a notion of equivalence of information structures as those that induce the same distributions over outcomes. They characterize this equivalence for several solution concepts, including Nash equilibrium. In a companion paper, they (Lehrer, Rosenberg, and Shmaya 2010) \cite{sh} look at the same set of solution concepts in (two-player) common interest games and characterize which information structures lead to higher (maximal) equilibrium payoffs. Gossner (2000) \cite{GOSS1} compares information structures through their ability to induce correlated equilibrium distributions, and Gossner (2010) \cite{GOSS2} introduces a relationship between ``ability'' and knowledge: not only does more information imply a broader strategy set, but a converse result holds as well. 

Ui and Yoshizawa (2015) \cite{ui} explore the value of information in (symmetric) linear-quadratic-Gaussian games and provide necessary and sufficient conditions for (public or private) information to increase welfare. Kloosterman (2015) \cite{close} explores (dynamic) Markov games and provides sufficient conditions for the set of strongly symmetric subgame perfect equilibrium payoffs of a Markov game to decrease in size (for any discount factor) as the informativeness of a public signal about the next period's game increases. Gossner and Mertens (2001) \cite{goos}, Lehrer and Rosenberg (2006) \cite{teacher}, P\k{e}ski (2008) \cite{pes}, and De Meyer, Lehrer, and Rosenberg (2010) \cite{meyer} all study the value of information in zero-sum games.

In a sense, this paper explores the decision problem faced by the receiver in which the information she obtains is endogenously generated by equilibrium play by the sender. That is, the receiver's problem is one in which \textit{ex ante} information acquisition results in a (possibly) different information generation process at the resulting posterior belief. Outside of that there are no strategic concerns; and the sender is perfectly informed, so there is no learning on his part. Consequently, this paper is more similar in spirit to the original question asked by Ramsey, and we need not concern ourselves with the possible complexity of information structures for multiplayer games of incomplete information.

Furthermore, this paper investigates the value of information in communication games, which are by definition games of information transmission. In contrast to the broad class of games of incomplete information, in communication games the transfer of information between sender and receiver is of paramount importance. The main results of this paper pertain to a restriction of that class of games--\hyperlink{simple}{simple} games--in which the sender's message affects the receiver's payoff only through the information that it contains.

The paper closest to this one is its companion paper, Whitmeyer (2019) \cite{Whit}, which investigates how a receiver can design an information structure in order to optimally elicit information from a sender in a communication game. There, in a two player communication game, the receiver may commit \textit{ex ante} to a signal $\pi: M \to \Delta(X)$, where $X$ is a (compact) set of signal realizations. Instead of observing the sender's message, the receiver observes a signal realization correlated with the message. In one of the main results of that paper, we discover that in simple two-action games this ability guarantees that the value of information is always positive. Contrast this to the negative result that we find in this paper--that in simple two-action games the value of information is not generally positive--the other paper turns this on its head and shows that information design guarantees a positive value of information.

\section{The Model}\label{model}

There are two players: an informed sender, $S$; and a receiver, $R$, who share a common prior about the state of the world, $\mu_{0} \in \Delta(\Theta)$, where $\mu_{0}(\theta) = \Pr(\Theta = \theta)$. There are two stages to the scenario--first, there is a learning stage. 

\textbf{Stage 1 (Learning Stage):}
There is some finite (or at least compact) set of signal realizations $Y$ and a signal or Blackwell experiment, mapping $\zeta: \Theta \to \Delta(Y)$ whose realization is public. This experiment leads to a distribution over posteriors, where the posterior following signal realization $y$ is $\mu_{y}$. Call $\zeta$ the \hypertarget{ie}{\textcolor{Plum}{Initial Experiment}}. Each signal realization begets (via Bayes' law) a posterior distribution, $\mu_{y}$. Thus, experiment $\zeta$ leads to a distribution over posterior distributions, $P \in \Delta \Delta \left(\Theta\right)$, whose average is the prior distribution:
    \[\mathbb{E}_{P}\left[\mu\right] \equiv \int_{\Delta\left(\Theta\right)}\mu dP(\mu) = \mu_{0}\]

Each posterior is the prior for the ensuing communication game. That is, following each realization of the experiment, the sender and receiver then take part in a second stage, the communication game.

\textbf{Stage 2 (Communication Game):} In this stage, $S$ and $R$ share the common prior $\mu_{y}$. The sender has private information, his type (or the state of the world), $\theta \in \Theta$: he observes his type before choosing a message, $m$, from a set of messages $M$. The receiver observes $m$, but not $\theta$, updates her belief about the receiver's type and message using Bayes' law, then chooses a mixture over actions, $A$. We assume that these sets, $M, A$ and $\Theta$, are finite.

Each player, $S$ and $R$, has preferences over the message sent, the action taken, and the type of the sender. These are represented by the utility functions\footnote{Since the domain is finite, any $u_{i}$ is continuous.} $u_{i}$, $i \in \left\{S,R\right\}$: $u_{i}: M \times A \times \Theta \to \Re$.

Let us revisit the timing. First, there is an initial experiment which begets a distribution over (common) posterior beliefs, which are each respectively (common) prior beliefs in the ensuing communication game. Second, $S$ observes his private type $\theta \in \Theta$, and chooses a message $m \in M$ to send to $R$. $R$ observes $m$, updates his belief, and chooses action $a \in A$.

We extend the utility functions for the players to behavioral strategies. A behavioral strategy for $S$, $\sigma_{\theta}(m)$ is a probability distribution over $M$; it is the probability that a type $\theta$ sender sends message $m$. Similarly, a behavioral strategy for $R$, $\rho(a | m)$ is a probability distribution over $A$; it is the probability that the receiver chooses action $a$ following message $m$.

We focus on receiver-optimal Perfect Bayesian Equilibrium (PBE), which we define in the standard manner. Henceforth by equilibrium or PBE, we refer to those particular equilibria, and by receiver's payoff we mean the receiver's payoff in the receiver-optimal PBE.

Throughout, we consider various sub-classes of communication games. These sub-classes are defined as follows
\begin{definition}
A communication game is \hypertarget{simple}{\textcolor{Plum}{Simple}} if the receiver has preferences over the action taken and the type of the sender, but not over the message chosen by the sender. Equivalently, a game is simple provided the receiver's preferences are represented by the utility function $u_{R}: A \times \Theta \to \Re$.
\end{definition}
On occasion, we derive results that hold for two other classes of communication games; cheap talk, and cheap talk with transparent motives. We remind ourselves of their definitions:
\begin{definition}
A communication game is \hypertarget{ct}{\textcolor{Plum}{Cheap Talk}} if the sender has preferences over the action taken by the receiver and the type of the sender, but not over the message he chooses. Namely, for each type, each message is equally costless. Equivalently, a game is cheap talk provided the sender's preferences are represented by utility function $u_{S}: A \times \Theta \to \Re$.
\end{definition}
A subclass of the class of cheap talk games are those with transparent motives, which term was introduced in Lipnowski and Ravid (2017) \cite{transparent}:
\begin{definition}
A communication game is \hypertarget{ctwt}{\textcolor{Plum}{Cheap Talk with Transparent Motives}} if the game is cheap talk and the sender's preferences over the action taken by the receiver are independent of his type. Equivalently, a game is cheap talk with transparent motives provided the sender's preferences are represented by the utility function $u_{S}: A \to \Re$.
\end{definition}

\section{When the Value of Information is Always Positive}\label{staypositive}

This section is devoted to establishing the following theorem, which provides sufficient conditions for the value of information to always be positive in communication games.

\begin{theorem}\label{main1}
In simple communication games, the value of information is always positive for the receiver provided
\begin{enumerate}
    \item There are two states (or fewer); or 
    \item The receiver has two actions (or fewer) and
\begin{enumerate}[i.]
    \item There are three states (or fewer); or
    \item There are two messages (or fewer); or
    \item The game is cheap talk.
\end{enumerate}
\end{enumerate}
\end{theorem}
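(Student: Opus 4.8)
The key object is the receiver's value function $V(\mu)$, defined as the receiver's payoff in the receiver-optimal PBE of the simple communication game with prior $\mu$. I want to show $V$ is convex on $\Delta(\Theta)$ under each listed hypothesis. The first and cleanest idea is a general reformulation: because the game is simple, the sender's equilibrium strategy induces, from the prior $\mu$, some Bayes-plausible distribution over posteriors (a "splitting" of $\mu$), and the receiver then best-responds at each posterior; the receiver-optimal PBE corresponds to choosing, among all splittings that are consistent with some sender-equilibrium incentive, the one maximizing the receiver's average continuation payoff. Writing $v(\mu)$ for the receiver's one-shot payoff from her optimal action at belief $\mu$ (so $v$ is the upper envelope of finitely many affine functions, hence convex and piecewise linear), the receiver-optimal equilibrium value is $V(\mu) = \max \sum_k \lambda_k\, v(\mu_k)$ over sender-incentive-feasible splittings $\{(\lambda_k,\mu_k)\}$ of $\mu$. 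I would first make this reduction precise and note that $V \ge v$ always (babbling is always an equilibrium when the game is simple, since the sender is then indifferent), and $V \le \operatorname{cav} v$ trivially; the whole difficulty is that the feasible splittings are constrained by sender incentive compatibility, so $V$ need not equal $\operatorname{cav} v$ and need not even be convex.

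\textbf{Case 1 (two states).} Here $\Delta(\Theta)$ is a segment, so I want to show $V$ restricted to any subsegment lies below the chord. The clean approach: fix any two priors $\mu_0^{(1)},\mu_0^{(2)}$ and a convex combination $\mu_0$; take a receiver-optimal equilibrium at $\mu_0$ with its induced splitting into posteriors $\mu_k$ with weights $\lambda_k$ and sender strategy $\sigma$. I would argue that the same sender \emph{message-to-posterior} structure can be "replayed" at $\mu_0^{(1)}$ and $\mu_0^{(2)}$: with two states, any posterior reachable from $\mu_0$ is also reachable from $\mu_0^{(j)}$ by an appropriate (feasible) sender strategy, because the relevant incentive constraints for the sender depend only on the receiver's action assigned to each message, not on the prior. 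Concretely, the receiver's action plan $m \mapsto \rho(\cdot\mid m)$ from the $\mu_0$-equilibrium is still sender-incentive-compatible at $\mu_0^{(j)}$ (the sender's best-response correspondence over messages is prior-independent once the receiver's response rule is fixed — this is where \emph{simplicity} is not even needed, but two states makes the posterior-matching work), so it yields a valid equilibrium there with value $\sum_k \lambda_k^{(j)} v(\mu_k)$ for suitable weights; averaging and using that the map from priors to feasible weight-vectors is affine on the segment gives $\sum_j \alpha_j V(\mu_0^{(j)}) \ge \sum_j \alpha_j \sum_k \lambda_k^{(j)} v(\mu_k) = \sum_k \lambda_k v(\mu_k) = V(\mu_0)$. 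The one-dimensionality of the simplex is what makes "any posterior used at $\mu_0$ is straddled appropriately by posteriors usable at the endpoints" go through.

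\textbf{Case 2 (two receiver actions).} Now $v$ is the max of two affine functions, so $\{\mu : v = $ "action $a$ optimal"$\}$ is a halfspace; call the separating hyperplane $H$. The receiver-optimal equilibrium splits $\mu_0$ into posteriors, and all that matters for the receiver's payoff is the total weight placed on each side of $H$ together with the conditional means on each side. I would show $V(\mu_0) = \max\{ v(\mu_0),\ \text{(best two-posterior split across } H)\}$, i.e. for the receiver's purposes only a binary split matters, and then the feasible set of such binary splits is governed by a \emph{single} sender incentive inequality per message, which I can track. Under (i) three states, (ii) two messages, or (iii) cheap talk, this incentive structure collapses enough to guarantee the feasible "how much mass can I push across $H$ and how far" region, as a function of $\mu_0$, has the right monotonicity/concavity to make the resulting $V$ convex. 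In the cheap-talk subcase the sender's preferences over actions alone drive everything, so the set of implementable receiver action-distributions is prior-independent (Lipnowski–Ravid-style), which immediately gives $V = \operatorname{cav}$ of something convex; with two messages the sender has essentially a single binary choice so his IC is one scalar inequality; with three states dimension-counting on the hyperplane $H$ (a segment inside a triangle) reduces it to the Case-1 argument.

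\textbf{Main obstacle.} The hard part is Case 2 when the game is \emph{not} cheap talk and there are \emph{three or more} messages — precisely the boundary the paper's counterexamples later probe. There, message costs interact with the prior nontrivially (the sender's willingness to separate depends on $\mu$ because costs are weighed against action-induced payoffs that themselves depend on the posterior), so the feasible splitting correspondence is genuinely prior-dependent and need not vary affinely. I expect the proof to handle this by exploiting that with only \emph{three} states the hyperplane $H$ is a $1$-dimensional segment, so the receiver effectively faces a two-state problem \emph{on each side of $H$}, and any sender message that is used must, by optimality of the receiver's equilibrium, be "efficiently" pushing mass toward an extreme posterior; one then shows the collection of such extreme posteriors, and the costs of reaching them, can be simultaneously replicated at nearby priors, reducing again to a chord comparison. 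Making that replication argument airtight — checking that the receiver-optimal equilibrium's off-path beliefs and the sender's cost comparisons survive the perturbation of the prior — is the delicate step, and I would budget most of the write-up for it, treating the cheap-talk and two-message subcases as short corollaries of the prior-independence of the implementable action set.
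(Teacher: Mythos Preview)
Your overall framing is reasonable, but the argument has two genuine gaps that do not close without additional ideas, and the paper's proof proceeds differently.

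\textbf{Case 1.} Your transport argument says: take the receiver-optimal equilibrium at $\mu_0$ with posteriors $\{\mu_k\}$, freeze the receiver's response rule, and replay the same posteriors at each endpoint $\mu_0^{(j)}$ by adjusting the sender's mixture. This works only when $\mu_0^{(j)}$ lies in the convex hull of $\{\mu_k\}$; if $\mu_0^{(j)}$ is more extreme than every $\mu_k$, no Bayes-plausible splitting of $\mu_0^{(j)}$ uses only those posteriors, and your ``affine weight map'' statement is simply false. The paper does not try to transport the equilibrium. Instead it composes the initial experiment with the receiver-optimal equilibria at each realized posterior, bounds each of those equilibria from below by the \emph{pooling} outcome when the realized posterior has escaped $[\min_k\mu_k,\max_k\mu_k]$, and shows that the resulting composite experiment is Blackwell more informative than the null-optimal one. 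The key step you are missing is exactly this: when the prior moves past the extreme posterior of the $\mu_0$-equilibrium, you must fall back to pooling and then argue the composite is still more informative---not that the same equilibrium persists.

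\textbf{Case 2.} Two problems. First, your cheap-talk subcase conflates cheap talk with transparent motives: in general cheap talk the sender's preferences are type-dependent, so the Lipnowski--Ravid prior-independence of implementable action distributions does not hold, and your ``$V=\operatorname{cav}$ of something convex'' claim is unsupported. The paper's route is different and relies on two lemmas you do not have: (a) in any simple two-action game there is a receiver-optimal equilibrium in which no type mixes across messages that induce strictly different receiver actions; and (b) whenever at most two messages are used in the receiver-optimal equilibrium at $\mu_0$, any initial experiment helps (this is a direct payoff computation, not a convexity-of-$V$ argument). Cheap talk then follows because (a) forces at most two messages. Second, your three-state sketch (``$H$ is a segment, reduce to Case 1'') is not an argument: the paper's proof here needs (a) to cut down to at most three messages, dispatches the $\le 2$-message case via (b), and then handles the genuinely three-message case by a separate case analysis showing the only surviving configuration has one type separating and another mixing onto the boundary $R_1\cap R_2$, after which a direct inequality finishes it. Your dimension-counting does not substitute for that case analysis.

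A minor point: your justification for $V\ge v$ (``babbling is always an equilibrium since the sender is indifferent'') is wrong for simple games that are not cheap talk---the sender can have message costs and a dominant message---though the inequality itself is correct because $v$ is convex and any equilibrium induces a Bayes-plausible splitting.
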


To begin, we show that if there are two states of the world (or two types of sender), the receiver's payoff is convex in the prior. Observe that if there is no initial experiment, and the sender and receiver participate in the signaling game with common prior $\mu_{0}$, then there exists a signal or experiment $\eta: \Theta \to \Delta(M)$ that is induced by the optimal equilibrium. This experiment leads to a distribution over posteriors, where the posterior following message $m$ is $\mu_{m}$. Call this experiment the \hypertarget{ne}{\textcolor{Plum}{Null-Optimal Experiment}}.

\begin{lemma}\label{prop22}
In any \textit{simple} communication game with two states and $n$ actions, the receiver's payoff is convex in the prior. 
\end{lemma}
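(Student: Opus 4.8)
The plan is to work entirely in the one–dimensional belief space. Since $|\Theta|=2$, I identify a prior with the number $q:=\mu(\theta_1)\in[0,1]$; I write $V(q)$ for the receiver's payoff (her payoff in the receiver-optimal PBE) at prior $q$, and let $v(q):=\max_{a\in A}\bigl(q\,u_R(a,\theta_1)+(1-q)\,u_R(a,\theta_2)\bigr)$ be the value of the receiver's underlying decision problem, which is convex and piecewise linear as a maximum of affine functions. The first thing I would record is that, because the game is \hyperlink{simple}{simple}, in any PBE the receiver best responds to her posterior after every message, so her continuation payoff after a message inducing posterior $q_m$ is exactly $v(q_m)$; hence her ex ante payoff in that PBE is $\sum_m\Pr(m)\,v(q_m)=\mathbb E_P[v]$ for the Bayes-plausible $P\in\Delta[0,1]$ generated by the sender's strategy (the \hyperlink{ne}{Null-Optimal Experiment} when the PBE is receiver-optimal). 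In particular $V(q)\ge v(q)$ by Jensen, with equality exactly when babbling is receiver-optimal; and a receiver-optimal PBE exists, since PBE exist and the PBE set is compact — a limit of PBE is a PBE, because on-path beliefs converge via Bayes' rule, sequential rationality survives by upper hemicontinuity of best responses, and off-path belief constraints are vacuous.

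The crux of the argument is a transplantation lemma: \emph{if there is a PBE at some prior $q$ whose induced posteriors with positive probability are $q_1<\dots<q_k$, with probabilities $p_1,\dots,p_k>0$ (say message $m_i$ induces $q_i$), then for every probability vector $(p'_1,\dots,p'_k)$ with $\sum_ip'_iq_i=q'\in(0,1)$ there is a PBE at prior $q'$ giving the receiver $\sum_ip'_iv(q_i)$.} To prove it I would keep the receiver's strategy (on and off path) exactly, and replace the sender's strategy by $\sigma'_{\theta_1}(m_i)=p'_iq_i/q'$ and $\sigma'_{\theta_2}(m_i)=p'_i(1-q_i)/(1-q')$, with every other message unsent; a short check shows these are valid distributions and that $m_i$ still induces $q_i$ with total weight $p'_i$. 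Then the receiver's sequential rationality is untouched (her posteriors are unchanged, and off-path messages keep their old justifying beliefs), and the sender's sequential rationality survives because the payoff type $\theta$ gets from a message $m$, namely $\sum_a\rho(a\mid m)u_S(m,a,\theta)$, does not depend on the prior and so is as before, while the support of $\sigma'_\theta$ is contained in that of the original $\sigma_\theta$ and hence still lies in $\theta$'s set of optimal messages. (This uses simplicity but not cheap talk.) Consequently $V(q')\ge\sum_ip'_iv(q_i)$ for every such reweighting.

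From this I would extract two things. First, $V$ is continuous on $[0,1]$: upper semicontinuity is ``the limit of PBE is a PBE'' applied to receiver-optimal PBE along a convergent sequence of priors; lower semicontinuity follows from $V\ge v$ and continuity of $v$ when babbling is optimal at the limit point, and from the transplantation lemma (applied with weights tending to the optimal split, which is legitimate since then $q_1<q<q_k$) when the optimal PBE there is informative. Second, $V$ is midpoint convex near every interior $q$. If babbling is optimal at $q$ this is immediate: $V(q)=v(q)\le\tfrac12 v(q-h)+\tfrac12 v(q+h)\le\tfrac12 V(q-h)+\tfrac12 V(q+h)$. If the optimal PBE at $q$ is informative, with optimal split $(p_i,q_i)$, then for all small $h$ I would write $p=\tfrac12p^{(-)}+\tfrac12p^{(+)}$ with $p^{(\pm)}$ probability vectors on $\{q_1,\dots,q_k\}$ satisfying $\sum_ip^{(\pm)}_iq_i=q\mp h$; this is possible because $p$ lies in the relative interior of the polytope $\{x:0\le x_i\le 2p_i,\ \sum_ix_i=1\}$, on which $\sum_ix_iq_i$ ranges over a nondegenerate interval around $q$ (as $q_1<q<q_k$). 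Applying the transplantation lemma, $V(q\pm h)\ge\sum_ip^{(\pm)}_iv(q_i)$, so $\tfrac12V(q-h)+\tfrac12V(q+h)\ge\sum_ip_iv(q_i)=V(q)$.

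To finish, I would invoke the elementary fact that a continuous function on an interval which is midpoint convex in a neighborhood of each interior point is convex: were the chord of $V$ over some $[a,b]$ strictly exceeded at an interior point, the excess over the chord would attain a positive maximum at an interior point, and local midpoint convexity would force that maximum set to be open, hence — being also closed and $[a,b]$ connected — all of $[a,b]$, contradicting its vanishing at $a$. Thus $V$ is convex on $(0,1)$, and continuity at the endpoints extends this to $[0,1]$. The step I expect to be the main obstacle is the transplantation lemma: specifically, checking that re-solving for the sender against the \emph{unchanged} receiver strategy preserves incentive compatibility (it does, exactly because message payoffs are prior-free and the message support can only shrink), together with the bookkeeping of the ``$x_i\le 2p_i$'' box that keeps the midpoint decomposition feasible for small $h$; the continuity and convexity-from-local-convexity steps are routine once that is in hand.
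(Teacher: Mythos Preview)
Your proof is correct and takes a genuinely different route from the paper's. The paper first argues that without loss of generality only two messages are on path, and then splits into three cases according to the structure of the receiver-optimal equilibrium at $\mu_0$ (both types pool; one type mixes and the other is pure; both types mix). In each case it shows directly that for \emph{any} initial experiment $\zeta$, the compound experiment $\xi$ (first $\zeta$, then equilibrium play at each realized posterior) is Blackwell more informative than the null-optimal experiment $\eta$: whenever the realized posterior lands in the convex hull of $\eta$'s support the same equilibrium configuration survives with adjusted mixing weights, and outside that hull the pooling payoff supplies a lower bound whose posterior is more extreme than any point of $\eta$'s support. Your transplantation lemma isolates precisely the structural fact the paper invokes implicitly in every case---reweighting the on-path posterior distribution of a PBE yields a PBE at the new barycenter, because message payoffs to the sender are prior-free and the support can only shrink---and then leverages it to obtain local midpoint convexity of $V$, dispensing entirely with the two-message reduction and the case analysis. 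The paper's argument is more concrete and delivers a direct Blackwell comparison (hence its ``proof without words'' picture); yours is more abstract and case-free, with the transplantation lemma a clean reusable statement, at the price of invoking the closed-graph property of the PBE correspondence and a local-to-global convexity step that the paper's direct comparison does not need.
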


\begin{proof}

We sketch the proof here and leave the details to Appendix \ref{22proof}. The first step is to establish Claim \ref{steak}, which allows us to restrict the number of messages in the game to two without loss of generality. As a result, there are just three cases that we need to consider: first, where the sender types pool in the receiver-optimal equilibrium at belief $\mu_{0}$; second, where one sender type mixes and the other chooses a pure strategy (in the receiver-optimal equilibrium at belief $\mu_{0}$); and third, where both sender types mix. Note that this lemma holds trivially if there exists a separating equilibrium, so we need not consider that case.

Next, following any realization of the initial experiment, $y$, there exists a receiver-optimal equilibrium. Equivalently, there exists a signal or experiment $\gamma_{y}: \Theta \to \Delta(M)$ that is induced by the optimal equilibrium. This experiment leads to a distribution over posteriors, where the posterior following message $m$ is $\mu_{m}$. Call this experiment the \hypertarget{se}{\textcolor{Plum}{y-equilibrium experiment}}.

Then, we define $\xi$ as the experiment that corresponds to the information ultimately acquired by the receiver following the initial learning and the resulting equilibrium play in the signaling game. All that remains is to show in each of the three cases that the null-optimal experiment, $\eta$, is less Blackwell informative than $\xi$ and so the receiver prefers $\xi$--the receiver prefers any learning.

Note that it is possible to ``prove this result without words", which proof is depicted in Figure \ref{case1}. In each case, the red point corresponds to the prior, the blue arrows and points to the initial experiment and posteriors, the yellow arrows and points to the null-optimal experiment, the green arrows to the y-equilibrium experiments, and the purple arrows and points to experiment $\xi$.

\begin{figure}
\begin{center}
\includegraphics[scale=.33]{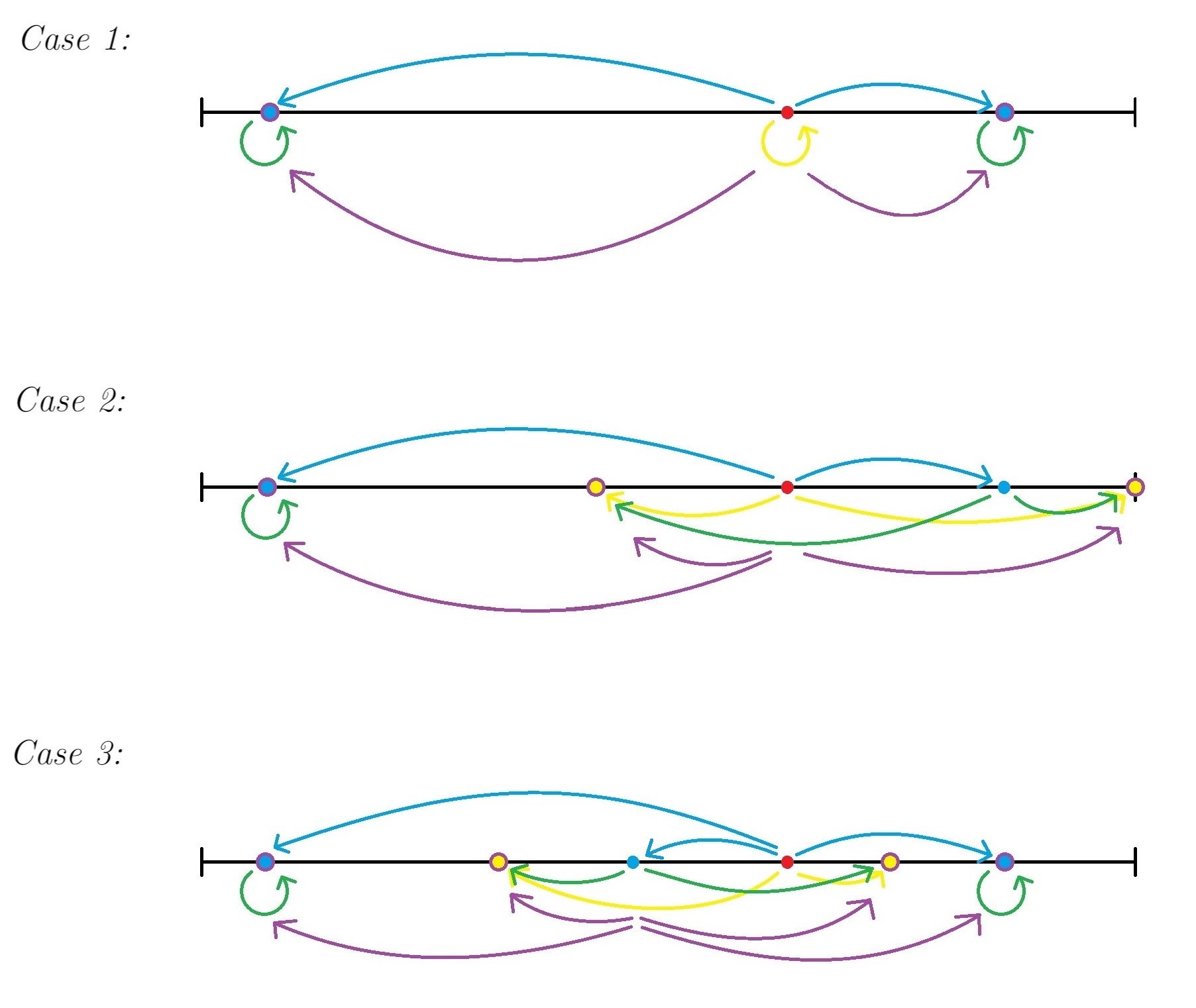}
\end{center}
\caption{\label{case1} Lemma \ref{prop22} Proof}\end{figure}

\end{proof}

Next, we explore convexity when the receiver has only two actions. First, we establish that it is without loss of generality to restrict attention to equilibria in which no type mixes over messages at which the receiver strictly prefers different actions.

\begin{lemma}\label{nodivide}
In simple games, there exists a receiver-optimal equilibrium in which no type of sender mixes over messages that induce beliefs at which the receiver strictly prefers different actions.
\end{lemma}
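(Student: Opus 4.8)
The plan is to start from any receiver-optimal equilibrium and perform a finite sequence of local surgeries, each of which preserves receiver-optimality and strictly decreases a combinatorial potential, until no type mixes over two ``decisive'' messages pointing to different actions. Call an on-path message $m$ \emph{decisive} if the receiver has a unique best response to its induced belief $\mu_m$, and write $a(m)$ for that action; the desired property is precisely that no type $\theta$ puts positive probability on two decisive messages $m,m'$ with $a(m)\neq a(m')$. For a given equilibrium $(\sigma,\rho)$ let $\Psi(\sigma,\rho)$ be the number of pairs $(\theta,m)$ with $m$ decisive and $\sigma_\theta(m)>0$. Since $\Psi$ is a non-negative integer, it is enough to show that whenever a receiver-optimal equilibrium violates the property there is another receiver-optimal equilibrium with strictly smaller $\Psi$.

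First I would isolate a consequence of optimality obtained by perturbing mixing weights. Suppose $(\sigma,\rho)$ is receiver-optimal and type $\theta$ sends decisive messages $m$ and $m'$ with $a:=a(m)\neq a(m')=:a'$. Move a mass $\varepsilon>0$ of $\theta$'s probability from $m'$ to $m$, keeping everything else fixed. Only $\theta$'s weight changes, and only on $m$ and $m'$, so the induced beliefs at all other messages are unchanged; and since the beliefs at $m$ and $m'$ start in the open sets where $a$, respectively $a'$, is the strict best response, for small $\varepsilon$ they remain there. Hence $\rho$ is still a best response everywhere, and because the receiver's action at every message is exactly as before, every sender type's payoff from every message is unchanged, so $\theta$ is still indifferent over its (unchanged) best-response set and no type has a new profitable deviation; the perturbed profile is a PBE. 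Because the game is simple, the receiver's payoff changes by $\varepsilon\,\mu_0(\theta)\big(u_R(a,\theta)-u_R(a',\theta)\big)$. Performing the same perturbation in the opposite direction and using that $(\sigma,\rho)$ is receiver-optimal forces $u_R(a,\theta)=u_R(a',\theta)$.

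Now the surgery. Keeping $\rho$ fixed, move $\theta$'s probability from $m'$ toward $m$ continuously, parametrized by the fraction $t$ of $\sigma_\theta(m')$ transferred, and stop at the first $t^\ast$ at which either $\sigma_\theta(m')$ hits $0$ or the (continuously moving) belief at $m$ first leaves the open region where $a$ is uniquely optimal. By the previous paragraph each infinitesimal transfer is worth zero to the receiver, so the profile at $t^\ast$ has the same, hence optimal, receiver payoff; and it is a PBE: at $m$ the receiver plays $a$, which is a best response because $a$ is optimal (uniquely, or tied on the boundary) at the new belief, at $m'$ she plays $a'$, still a best response, and every other message and every sender indifference is untouched. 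If we stopped because $\sigma_\theta(m')=0$, then $\theta$ no longer sends $m'$, so the pair $(\theta,m')$ leaves the count; if we stopped because the belief at $m$ reached the boundary, then $m$ is no longer decisive although $\theta$ still sends it, so the pair $(\theta,m)$ leaves the count. In either case no message that was not decisive has become decisive (only the beliefs at $m$ and $m'$ moved, and both were already decisive), so $\Psi$ strictly decreases. Iterating finitely often proves the lemma.

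The step I expect to be the real obstacle is the second stopping alternative: a finite transfer of probability can drag $\mu_m$ out of the region where $a$ is uniquely optimal, so one cannot in general simply pool $\theta$ onto $m$. The fix is to halt at the first boundary crossing and notice that rendering $m$ non-decisive is itself progress for $\Psi$; the points that then need care are checking that the receiver can still be assigned a pure best response at the now-indifferent message, that the receiver's payoff has genuinely stayed constant along the transfer (which is where simplicity and the perturbation lemma are used), and that no previously non-decisive message has become decisive.
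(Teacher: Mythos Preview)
Your approach differs substantially from the paper's, and in an interesting way, but the surgery step has a gap.

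The paper's proof is written for the two-action setting only (which is the only place the lemma is invoked). It partitions the types into $\Theta_1$ and $\Theta_2$ according to which of $a_1,a_2$ the receiver strictly prefers in that state, assuming without loss that no type gives indifference. If $\theta_k\in\Theta_1$ mixes over some $m_1$ decisive for $a_1$ and some $m_2$ decisive for $a_2$, the paper shifts \emph{all} of $\theta_k$'s weight on $m_2$ onto $m_1$ in one move. Because $v_k>w_k$, this strictly raises the receiver's payoff, so the original profile could not have been receiver-optimal; and the shift automatically preserves the receiver's best responses, since adding a $\Theta_1$-type to a message where $a_1$ was strictly optimal can only reinforce $a_1$, and removing one from a message where $a_2$ was strictly optimal can only reinforce $a_2$. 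There is no indifference lemma and no stopping time: one global move and a strict improvement finish the argument.

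Your route is more ambitious. You first extract the identity $u_R(a,\theta)=u_R(a',\theta)$ from receiver-optimality by two-sided perturbation, and then try to surgery away the offending mixture while remaining on the set of optima. Under the paper's two-action no-tie convention your first step is already a contradiction, so the surgery would never be reached; your framework is really aimed at the literal statement of the lemma with an arbitrary action set (or with indifferent types), which the paper's proof does not cover. That is a genuine gain.

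The gap is in your stopping rule. You halt when either $\sigma_\theta(m')=0$ or $\mu_m$ first touches the boundary of the $a$-region, but you never monitor $\mu_{m'}$. Removing $\theta$'s mass from $m'$ moves $\mu_{m'}$ as well, and nothing in your argument prevents $\mu_{m'}$ from leaving the open $a'$-region before either of your two events occurs; at such a $t^\ast$ the assertion ``at $m'$ she plays $a'$, still a best response'' is unverified, and the profile need not be a PBE (nor is the receiver's payoff guaranteed unchanged if you then switch her action at $m'$). The fix is simple: add a third stopping condition, halting also at the first $t$ at which $\mu_{m'}$ reaches the boundary of the $a'$-region, and note that this renders $m'$ non-decisive, so the pair $(\theta,m')$ drops out of $\Psi$ and the potential still strictly falls. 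With that amendment your argument goes through and is strictly more general than the paper's.
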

\begin{proof}
The proof is left to Appendix \ref{ndproof}
\end{proof}

Second, we discover that if there is a receiver optimal equilibrium at belief $\mu_{0}$ in which at most two messages are used, then any information benefits the receiver. Formally,

\begin{lemma}\label{too}
Consider any simple communication game. If there is a receiver-optimal equilibrium at belief $\mu_{0}$ in which at most two messages are used, then any initial experiment benefits the receiver. 
\end{lemma}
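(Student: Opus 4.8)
The plan is to reduce the claim to a pointwise inequality between the receiver's value and a single \emph{affine} function, exploiting that a two-message equilibrium pins down the sender's incentives independently of the prior. Let $V$ denote the receiver's payoff in the receiver-optimal PBE, and write $\widehat{V}(\nu):=\max_{a\in A}\mathbb{E}_{\nu}[u_{R}(a,\theta)]$ for her payoff from acting optimally at belief $\nu$; as a maximum of affine functions, $\widehat{V}$ is convex. I would first record two facts. (a) In \emph{any} PBE at \emph{any} prior the receiver best responds to every on-path posterior $\nu_{m}$, so her payoff equals $\sum_{m}\Pr(m)\widehat{V}(\nu_{m})$, and since $\sum_{m}\Pr(m)\nu_{m}$ equals the prior, convexity gives $V\geq\widehat{V}$ everywhere. (b) ``Any initial experiment benefits the receiver'' is exactly the statement $\mathbb{E}_{P}[V]\geq V(\mu_{0})$ for every Bayes-plausible distribution of posteriors $P$ with barycenter $\mu_{0}$.

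Next I would fix a receiver-optimal equilibrium at $\mu_{0}$ using at most two messages. If it uses one message it is pooling, so $V(\mu_{0})=\widehat{V}(\mu_{0})$ and $\mathbb{E}_{P}[V]\geq\mathbb{E}_{P}[\widehat{V}]\geq\widehat{V}(\mu_{0})=V(\mu_{0})$ by convexity; so suppose it uses messages $m_{1},m_{2}$, with sender strategy $\sigma$ and pure receiver best responses $a_{1},a_{2}$ after $m_{1},m_{2}$ (by Lemma~\ref{nodivide} we may take $\sigma$ so that no type mixes over $m_{1},m_{2}$ when $a_{1}\neq a_{2}$). Define
\[
g(\nu)\;:=\;\mathbb{E}_{\theta\sim\nu}\!\left[\sum_{i=1,2}\sigma_{\theta}(m_{i})\,u_{R}(a_{i},\theta)\right],
\]
the receiver's payoff, at prior $\nu$, from the profile ``the sender plays $\sigma$ and the receiver answers $m_{i}$ with $a_{i}$'' (with off-path responses copied from the fixed equilibrium). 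Then $g$ is \emph{affine} in $\nu$, and $g(\mu_{0})=V(\mu_{0})$ because at $\mu_{0}$ this profile \emph{is} the chosen equilibrium. Hence, once I establish $V(\nu)\geq g(\nu)$ for all $\nu$, affineness of $g$ yields $\mathbb{E}_{P}[V]\geq\mathbb{E}_{P}[g]=g(\mu_{0})=V(\mu_{0})$, which is the desired conclusion.

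The crux is therefore $V(\nu)\geq g(\nu)$. Here I would use that the game is \emph{simple}: type $\theta$'s payoff from sending $m_{i}$ and being met with $a_{i}$ is the number $u_{S}(m_{i},a_{i},\theta)$, which does not reference the prior, so the incentive comparisons that make $\sigma$ optimal for the sender against the plan ``$m_{i}\mapsto a_{i}$''---and that deter every type from an off-path message---are precisely those that held at $\mu_{0}$. Consequently $\sigma$ is a sender best response to that plan at \emph{every} prior $\nu$; let $\nu_{1},\nu_{2}$ be the posteriors it induces at $\nu$, so $\Pr_{\nu}(m_{1})\nu_{1}+\Pr_{\nu}(m_{2})\nu_{2}=\nu$. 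If $a_{i}$ is a receiver best response at $\nu_{i}$ for $i=1,2$, the profile is a PBE at $\nu$, so $V(\nu)\geq g(\nu)$. Otherwise the receiver strictly prefers some other action at $\nu_{1}$ or at $\nu_{2}$; in that case I would show $g(\nu)<\widehat{V}(\nu)$, which together with $V\geq\widehat{V}$ finishes the proof. Writing $\ell_{a}(\cdot):=\mathbb{E}_{(\cdot)}[u_{R}(a,\theta)]$, if an action $b$ beats $a_{1}$ at $\nu_{1}$ then $\ell_{b}(\nu)=\Pr_{\nu}(m_{1})\ell_{b}(\nu_{1})+\Pr_{\nu}(m_{2})\ell_{b}(\nu_{2})$ strictly exceeds $g(\nu)=\Pr_{\nu}(m_{1})\ell_{a_{1}}(\nu_{1})+\Pr_{\nu}(m_{2})\ell_{a_{2}}(\nu_{2})$ as soon as $b$ may be taken to be $a_{2}$---which is automatic when the receiver has only two actions (the setting of Theorem~\ref{main1}(2))---and the case in which $a_{2}$ is beaten at $\nu_{2}$ is symmetric (if both fail, one checks $g(\nu)<\ell_{a_{1}}(\nu)$ directly). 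For general action sets the same idea should work, but one must select a single dominating action valid at both posteriors, possibly after relabelling $m_{1}\leftrightarrow m_{2}$ and re-invoking Lemma~\ref{nodivide}.

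I expect this last step to be the main obstacle: at priors $\nu$ where the plan $m_{i}\mapsto a_{i}$ is no longer a receiver best response, one cannot simply let the receiver switch to a strict best response at one induced posterior, since doing so can destroy the sender's incentive to play $\sigma$ and hence the equilibrium property---so repairing the profile is not free. The escape is precisely the observation that in those cases the receiver's commitment-free value $\widehat{V}(\nu)$ already dominates $g(\nu)$, so one never actually needs to exhibit an equilibrium at $\nu$ supported by $\sigma$.
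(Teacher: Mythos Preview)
Your argument is correct in the two-action case, and this is the only case the paper actually proves (despite the lemma's broad wording) and the only case in which it is applied. Your packaging is cleaner than the paper's: the paper partitions the types $\Theta=\Theta_{1}\sqcup\Theta_{2}$ by which action is correct, further partitions each piece by which of the two messages is sent, then reduces an arbitrary initial experiment to three canonical realizations (one where the two-message equilibrium persists, two pooling realizations) and verifies by direct computation that two residual terms $\Upsilon,\Gamma$ are nonnegative. Your route---fix the profile $(\sigma;m_{i}\!\mapsto\!a_{i})$, note that the sender's incentives are prior-free so the profile is a PBE whenever the receiver's responses remain optimal, and otherwise bound $g(\nu)$ by the pooling value $\widehat{V}(\nu)$---is the same idea expressed as ``$g$ is an affine minorant of $V$ touching at $\mu_{0}$,'' from which $\mathbb{E}_{P}[V]\geq g(\mu_{0})=V(\mu_{0})$ is immediate. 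The invocation of Lemma~\ref{nodivide} is harmless but unnecessary: $g$ is affine regardless of whether types mix.

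Where you should tighten the write-up is the hedge in your final paragraph. For general action sets the statement is \emph{false}, not merely harder: Lemma~\ref{3by3} exhibits a simple cheap-talk game with three actions in which the receiver-optimal equilibrium at $\mu_{0}$ uses exactly two messages and yet a binary initial experiment strictly hurts the receiver. So ``the same idea should work'' cannot be right; the place your argument breaks with three or more actions is precisely the step where the action beating $a_{1}$ at $\nu_{1}$ need not be $a_{2}$, and there need be no single action $b$ with $\ell_{b}(\nu)\geq g(\nu)$. I would recommend stating the lemma (or at least your proof) under the standing hypothesis $|A|=2$ and dropping the speculative last sentence.
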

\begin{proof}
The full proof is left to Appendix \ref{tooproof}.
\end{proof}

Because of the costless nature of messages in cheap talk games, in conjunction with Lemma \ref{nodivide}, it is clear that there must be a receiver-optimal equilibrium at belief $\mu_{0}$ in which at most two messages are used. Accordingly, Lemma \ref{too} implies

\begin{corollary}
In any $n$ state, two action, simple cheap talk game, the receiver's payoff is convex in the prior.
\end{corollary}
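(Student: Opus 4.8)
The corollary is meant to be a short consequence of the machinery already in place, so the plan is to package it as an application of Lemma~\ref{too}. First observe that the conclusion ``the receiver's payoff is convex in the prior'' is merely a restatement of ``at every prior, any initial experiment benefits the receiver'': writing $V(\mu_0)$ for the receiver-optimal equilibrium payoff at prior $\mu_0$, an initial experiment inducing a distribution $P$ over posteriors with $\mathbb{E}_P[\mu]=\mu_0$ delivers the receiver the payoff $\mathbb{E}_P[V(\mu)]$, and the inequality $\mathbb{E}_P[V(\mu)]\ge V(\mu_0)$ holding for every $\mu_0$ and every such $P$ is exactly convexity of $V$. Hence it suffices to show that in an $n$-state, two-action, simple cheap talk game, \emph{at every prior $\mu_0$} there exists a receiver-optimal equilibrium in which at most two messages are used; Lemma~\ref{too}, applied at each $\mu_0$, then yields convexity.

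To produce such an equilibrium, fix $\mu_0$ and invoke Lemma~\ref{nodivide}: since the game is simple, there is a receiver-optimal equilibrium in which no sender type mixes over messages that induce beliefs at which the receiver strictly prefers different actions. With only two actions $a_1,a_2$, every on-path message falls into one of three classes according to the receiver's behavior at it — she strictly prefers (and plays) $a_1$, she strictly prefers (and plays) $a_2$, or she is indifferent (and possibly randomizes). The idea is to collapse all of the ``$a_1$'' messages into one message $m_1$, all of the ``$a_2$'' messages into one message $m_2$, and route the indifference messages into $m_1$ or $m_2$ (or keep a single residual indifference message). Collapsing is harmless for the sender because messages are costless (cheap talk) and the sender's payoff depends only on the induced action; it is harmless for the receiver because for each $i$ the set of beliefs at which $a_i$ is a best response is convex — a finite intersection of half-spaces — so any convex combination of beliefs drawn from the ``$a_i$'' class still makes $a_i$ optimal, and likewise the indifference hyperplane is itself convex, so merged beliefs drawn from it keep the receiver indifferent. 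Off-path messages are assigned deviation-deterring beliefs. The upshot is an equilibrium that gives the receiver exactly the original (optimal) payoff — using, at the merged indifference beliefs, any tie-breaking leaves the receiver's payoff unchanged — and that uses at most two on-path messages.

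The step that needs care is the routing of the indifference messages: merging such a message into the ``$a_i$'' message pushes its belief strictly into the $a_i$-region, which turns a (possibly mixed) response into the pure action $a_i$ and could in principle tempt some sender type to deviate toward it. Two facts contain this. First, Lemma~\ref{nodivide} already forbids a single sender type from placing weight on both a strict-$a_1$ and a strict-$a_2$ message, which restricts which types can be using indifference messages at all. Second, we retain the freedom to choose how the receiver breaks indifference — both which of $m_1,m_2$ absorbs a given type's indifference weight and, if all on-path beliefs lie on the hyperplane, what mixture the single surviving message induces — and this is precisely enough freedom to preserve the finitely many on-path sender incentive inequalities. I would organize the verification by cases on whether a type's message support lies in the closure of one strict best-response region (route it there) or entirely on the indifference hyperplane (merge such types together, keeping the receiver indifferent), checking IC in each case. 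This bookkeeping, rather than any conceptual difficulty, is the main obstacle; everything else is immediate from Lemmas~\ref{nodivide} and~\ref{too}.
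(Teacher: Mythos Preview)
Your proposal is correct and follows essentially the same route as the paper: invoke Lemma~\ref{nodivide}, use the costlessness of messages in cheap talk to collapse the on-path messages down to at most two, and then appeal to Lemma~\ref{too} at each prior. The paper compresses the collapsing step into a single ``it is clear,'' whereas you spell out why the merged beliefs stay in the correct best-response region and flag the indifference-hyperplane bookkeeping; that extra care is warranted but does not constitute a different argument.
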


From Lemma \ref{prop22} we know that in two state, two action simple communication games, the value of information is always positive for the receiver. Perhaps surprisingly, the value of information is also always positive for the receiver in three state, two action simple communication games. \textit{Viz},

\begin{lemma}\label{409}
In simple communication games, for three states and two actions, the receiver's payoff is convex in the prior.
\end{lemma}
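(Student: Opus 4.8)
The plan is to exploit the two-action structure to collapse everything onto cases already dispatched by Lemmas \ref{nodivide}, \ref{too} and the proof technique of \ref{prop22}. Fix a prior $\mu_0$ and, by Lemma \ref{nodivide}, take a receiver-optimal equilibrium at $\mu_0$ in which no sender type mixes across messages inducing beliefs at which the receiver strictly prefers different actions. With two actions the simplex $\Delta(\Theta)$ is cut by the receiver's indifference line $L$ into a region $R_0$ on which $a_0$ is optimal and a region $R_1$ on which $a_1$ is optimal, and each on-path message's posterior lies in the closure of one of them; so every on-path message carries a well-defined action recommendation (splitting recommendations at $L$-messages arbitrarily). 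Since the game is simple, the receiver's payoff in this equilibrium depends only on the conditionals $p_\theta := \Pr(a_1 \mid \theta)$, and it equals $\mathbb{E}_\eta[v]$ for the null-optimal experiment $\eta$, where $v(\mu) := \max_a \mathbb{E}_\mu[u_R(a,\theta)]$ is the receiver's convex myopic value; moreover $v(\nu) \le V(\nu) \le \mathbb{E}_\nu[\max_a u_R(a,\theta)]$ for every $\nu$.

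I would first clear the easy cases, each via a supporting hyperplane of $V$ at $\mu_0$. If the receiver-optimal payoff at $\mu_0$ is only $v(\mu_0)$, then $\mu \mapsto \mathbb{E}_\mu[u_R(a^*,\theta)]$ for the action $a^*$ optimal at $\mu_0$ lies weakly below $v$, hence below $V$, everywhere and agrees with $V$ at $\mu_0$. If instead it equals the full-information value $\mathbb{E}_{\mu_0}[\max_a u_R(a,\theta)]$, then that same kind of linear map — this time the one weakly above $V$ everywhere — is the supporting hyperplane. And if some receiver-optimal equilibrium at $\mu_0$ uses at most two messages, Lemma \ref{too} delivers the conclusion outright. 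So the content lies in the remaining configuration: both $R_0$ and $R_1$ receive on-path messages, at least three messages are genuinely needed, and the equilibrium is worth strictly more than $v(\mu_0)$ but strictly less than the full-information value.

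In that configuration I would try to merge messages within an action group. If on-path messages $m, m'$ both recommend $a_0$ and every type sending $m'$ weakly prefers message $m$ to message $m'$ given action $a_0$, then rerouting all $m'$-senders to $m$ is again an equilibrium with the same $(p_\theta)$: the receiver still takes $a_0$ at $m$ (the new posterior is a convex combination of two beliefs in $\overline{R_0}$), off-path one retains $m'$'s former posterior, and no type gains by deviating. When neither group admits such a merge in either direction, the sender's ranking of a group's messages is non-trivially split; with only three types this rigidly constrains who sends what, and I would argue the receiver-optimal $(p_\theta)$ is then also supported either by a two-message equilibrium (so Lemma \ref{too} applies) or by a fully separating one (so the linear hyperplane above applies). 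Should that bookkeeping prove unwieldy, the fallback is to run the argument of Lemma \ref{prop22} in dimension two: with $\gamma_y$ the $y$-equilibrium experiment at each posterior $\mu_y$ of an arbitrary initial experiment and $\xi$ the composition of the initial experiment with the $\gamma_y$, show that $\eta$ is Blackwell-dominated by $\xi$; since $v$ is convex this yields $V(\mu_0) = \mathbb{E}_\eta[v] \le \mathbb{E}_\xi[v] = \mathbb{E}[V(\mu_y)]$, which is convexity at $\mu_0$.

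The crux — and the main obstacle — is the genuine three-message case with two actions: one must show that an incentive-driven multiplicity of messages cannot persist in a receiver-optimal equilibrium whose null-optimal experiment fails to survive a worst-case split, i.e., that the merging (or Blackwell-domination) always goes through. This is exactly the boundary the theorem lives on, since with four states, three messages, and non-cheap talk a "robustly three-message" equilibrium breaks convexity (Theorem \ref{main2}); the three-state hypothesis must therefore enter essentially, presumably through the fact that $L$ is a single line across a triangle, which leaves too little room for an incentive-split message structure to be simultaneously necessary and bad for the receiver after learning.
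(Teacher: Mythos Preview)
Your scaffolding matches the paper's: invoke Lemma~\ref{nodivide}, clear the cases with at most two messages via Lemma~\ref{too}, and isolate the genuinely three-message configuration as the crux. But at that crux your proposal stops short of a proof. The merging step and the claim that the surviving $(p_\theta)$ is ``also supported either by a two-message equilibrium or by a fully separating one'' are asserted, not established, and in fact need not hold: the paper's Case~2b is exactly a three-message receiver-optimal equilibrium ($\theta_H$ separates to $m_H$, $\theta_L$ plays $m_L$ with posterior on the indifference line $R_1\cap R_2$, and $\theta_M$ mixes over $m_L$ and $m_M$) whose two-message payoff-equivalent (the Case~3b arrangement with $\theta_L,\theta_M$ pooling strictly inside $R_2$) may fail the \emph{sender's} incentive constraints and hence not be an equilibrium at all, so Lemma~\ref{too} is unavailable. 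Your Blackwell-domination fallback is likewise only a hope: as the discussion after Lemma~\ref{3by3} makes explicit, once beliefs are two-dimensional $\eta$ and $\xi$ need not be Blackwell comparable, and you give no argument that the three-state hypothesis rescues comparability.

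What the paper actually does for the residual case is neither a reduction nor a Blackwell comparison but a direct payoff computation. It first pins down that, up to relabeling, the only irreducible three-message configuration is Case~2b above, with receiver value $v=\mu_H^0 v_H+\mu_M^0 w_M+\mu_L^0 w_L$ and feasibility condition $\mu_L\le \frac{w_M-v_M}{v_L-w_L}\,\mu_M$. A binary initial experiment is taken to yield one posterior $\mu_1$ where this equilibrium persists and one posterior $\mu_2$ where it fails, i.e.\ $\mu_L^2>\frac{w_M-v_M}{v_L-w_L}\,\mu_M^2$. The key step you are missing is the geometric observation that this infeasibility inequality \emph{forces} $a_1$ to be optimal at $\mu_2$; once that is shown, subtracting $v$ from the expected post-experiment payoff leaves exactly $(1-p)\bigl[\mu_L^2(v_L-w_L)-\mu_M^2(w_M-v_M)\bigr]$, which is nonnegative by the same inequality. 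That single-line calculation is where the three-state hypothesis enters essentially, and it is what your proposal needs in place of the unexecuted merging/Blackwell arguments.
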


\begin{proof}

Again, we leave the detailed proof to Appendix \ref{409proof} but provide a sketch here. From Lemma \ref{nodivide}, we conclude that there is a receiver-optimal equilibrium at $\mu_{0}$ in which at most three messages are used. If two messages or fewer are used, then from Lemma \ref{too}, we have convexity. Thus, it remains to consider the case in which three messages are used. Fortunately, we show that there is just one such equilibrium that we need to consider.

\begin{figure}
    \centering
    \includegraphics[scale=.28]{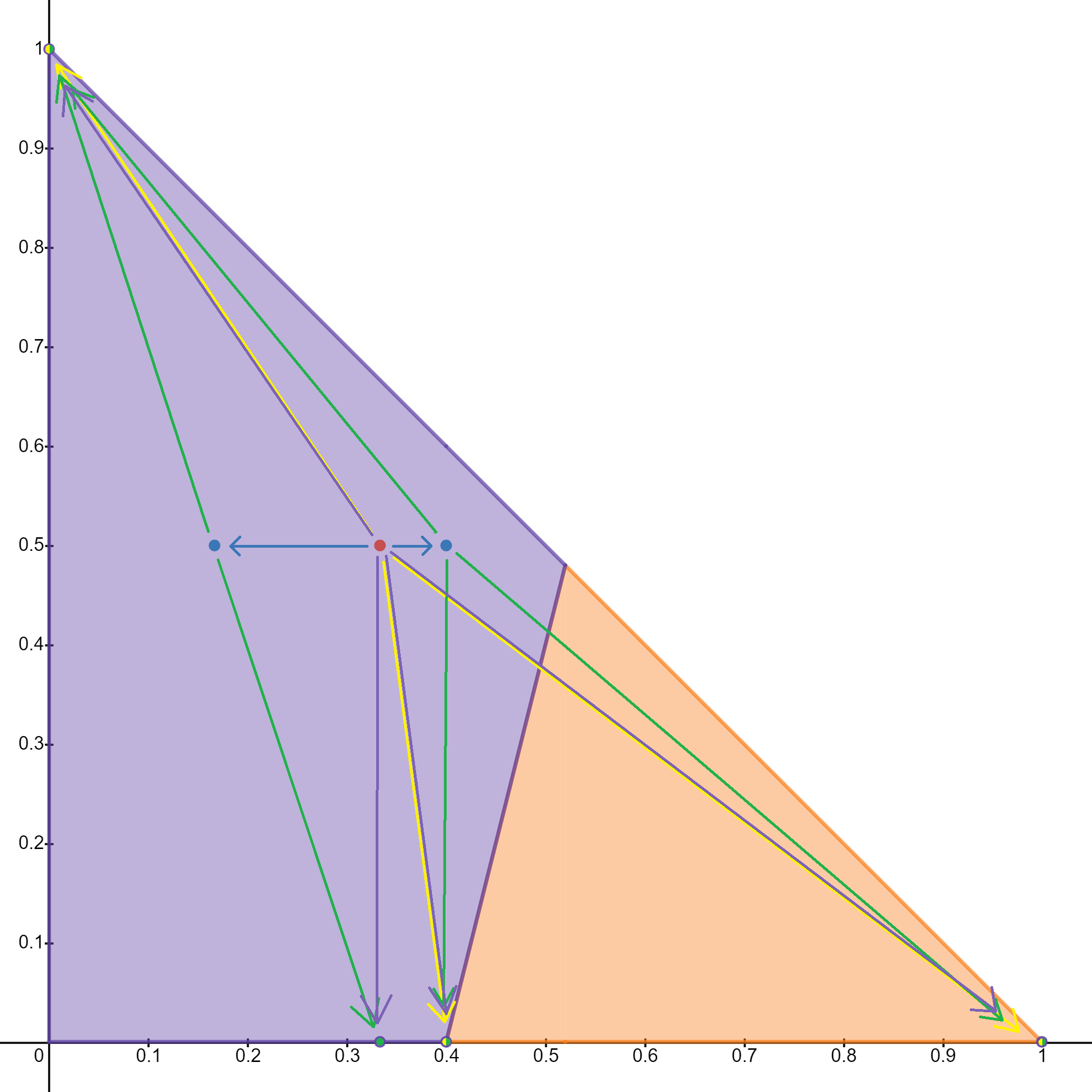}
    \caption{Lemma \ref{409} Proof}
    \label{3ow}
\end{figure}

Like Lemma \ref{prop22}, it is also possible to prove Lemma \ref{409} without words, which proof is depicted in Figure \ref{3ow}. The red point corresponds to the prior, the blue arrows and points to the initial experiment and posteriors, the yellow arrows and points to the null-optimal experiment, the green arrows to the y-equilibrium experiments (or rather experiments that are payoff-equivalent to the y-equilibrium experiments), and the purple arrows and points to experiment $\xi$.

\end{proof}

\section{When the Value of Information is not Always Positive}\label{negative}

This section tempers the optimism inspired by the Section \ref{staypositive}. Namely, we establish Theorem \ref{main2}, which states that if none of the sufficient conditions from Theorem \ref{main1} hold in some communication game, then there may be information that hurts the receiver

\begin{theorem}\label{main2}
In the following communication games, \textit{ex ante} information may hurt the receiver:
\begin{enumerate}
    \item Simple games with four or more states, three or more messages, and two actions;
    \item Simple games with three or more states and actions, and two or more messages;
    \item Non-simple games with two or more states, actions, and messages.
\end{enumerate}
\end{theorem}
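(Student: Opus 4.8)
The plan is to prove Theorem~\ref{main2} by constructing, for each of the three listed classes of games, an explicit instance---a game, a prior $\mu_0$, and an initial experiment $\zeta$---at which the receiver's optimal-PBE payoff, regarded as a function $V_R$ of the belief, fails the convexity inequality in the direction harmful to the receiver: $V_R(\mu_0) > \mathbb{E}_P[V_R(\mu)]$, where $P$ is the distribution over posteriors induced by $\zeta$. Since every game here is finite it suffices to take $\zeta$ binary, splitting $\mu_0$ into posteriors $\mu_1,\mu_2$ with weights $p,1-p$, and to verify $V_R(\mu_0) > p\,V_R(\mu_1) + (1-p)\,V_R(\mu_2)$. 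The common design principle behind all three examples is the trade-off flagged in the introduction: at $\mu_0$ we engineer a receiver-favorable, informative equilibrium (fully or nearly separating), but choose the utilities so that at each posterior $\mu_i$ the sender's incentive constraints destroy that equilibrium, leaving only equilibria that are bad for the receiver---typically pooling, where in the simple cases the receiver's payoff collapses to her no-information value $\max_a \mathbb{E}_{\mu_i} u_R$. Calibrating parameters so that the Blackwell gain from $\zeta$ is strictly dominated by the loss from this equilibrium degradation yields the strict inequality.

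For each instance the verification would proceed as follows: (i) exhibit a candidate receiver-optimal PBE at $\mu_0$ and compute its payoff; (ii) confirm it is receiver-optimal by ruling out better PBE---in the small games used here, a finite check over which messages pool or separate and which off-path beliefs can be sustained; (iii) repeat steps (i)--(ii) at $\mu_1$ and at $\mu_2$, establishing that the best surviving PBE there is the degraded one; (iv) substitute into the displayed inequality. A useful accounting device is that the full-information payoff $\sum_\theta \mu(\theta)\max_a u_R(a,\theta)$ is affine and the no-information payoff $\max_a \sum_\theta \mu(\theta) u_R(a,\theta)$ is convex, while in simple games $V_R$ lies between the two; hence the example must make the value of information at $\mu_0$ exceed the convexity defect of the no-information payoff across $\mu_1,\mu_2$, which pins down how spread out $\zeta$ may be.

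The two simpler classes can be treated first. For part~3 (non-simple, two states, two actions, two messages) the receiver's direct preference over the message itself furnishes the non-convexity lever: at $\mu_0$ the receiver-optimal equilibrium has the sender send the message the receiver likes, and this arrangement unravels once the state becomes known, so even a $2\times2\times2$ game suffices. For part~2 (simple, at least three states and at least three actions) I would build a small game with three states and three actions---which can moreover be taken to be cheap talk with transparent motives---using the third receiver action to open an equilibrium-degradation channel that the two-action arguments behind Theorem~\ref{main1} rule out: at $\mu_0$ a partitional equilibrium splits the states into informative cells, while at the posteriors the sender strictly prefers to pool, dropping the receiver to her prior-best action.

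Part~1 (simple, two actions) is the delicate case and I expect it to be the main obstacle. Lemma~\ref{prop22}, Lemma~\ref{too}, Lemma~\ref{409}, and the cheap-talk corollary together absorb all the available slack, so the counterexample is forced to use four states, three messages, and strictly costly messages; the message-cost schedule must then be tuned so that a three-message informative equilibrium is incentive compatible precisely at $\mu_0$ and fails at both posteriors, while keeping $V_R$ genuinely non-convex rather than merely discontinuous in a convexity-respecting way. Threading that needle---and checking that no better equilibrium exists at $\mu_0$ and no better one survives at the posteriors---is where the real work lies. Finally, in all three parts, adjoining extra unused states, duplicate messages, or strictly dominated actions leaves the strict inequality intact, which promotes each minimal construction to the ``or more'' form stated in the theorem.
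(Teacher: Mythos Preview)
Your plan matches the paper's architecture: prove each part by an explicit counterexample at the minimal parameters (a $2\times 2\times 2$ modified Beer--Quiche for part~3, a three-state three-action transparent-motives cheap-talk game for part~2, a four-state three-message two-action costly-signaling game for part~1), then pad with redundant states, messages, or actions for the ``or more'' clauses. You also correctly flag part~1 as the hard construction requiring genuinely costly messages.

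One place your sketched mechanism departs from what actually works: for part~2 you plan to have the posteriors collapse to pooling (``at the posteriors the sender strictly prefers to pool, dropping the receiver to her prior-best action''). In the paper's counterexample that is \emph{not} what happens. At $\mu_0$ the receiver-optimal equilibrium has $\theta_H$ and $\theta_L$ choose distinct messages while $\theta_M$ mixes between them; at one posterior $\mu_1$ that same equilibrium survives, but at the other posterior $\mu_2$ the receiver-optimal equilibrium is a \emph{different} semi-separating one---now $\theta_H$ and $\theta_M$ separate and $\theta_L$ mixes---not pooling. The non-convexity arises because the composite experiment $\xi$ and the null-optimal experiment $\eta$ are Blackwell non-comparable: the direction along which information is revealed at $\mu_2$ is simply different from the direction at $\mu_0$. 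A pure collapse-to-pooling story is awkward here precisely because of the accounting device you yourself mention---the no-information payoff is convex, so the Blackwell gain from $\zeta$ tends to compensate; to beat it you would need $\zeta$ very tight around $\mu_0$ while still destroying the informative equilibrium on \emph{both} sides, which is in tension with the continuity of best replies in cheap talk. When you build the example, aim for a posterior at which a \emph{different} informative equilibrium is forced, rather than none at all.

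Beyond this, the theorem's content is the three concrete games together with the case-by-case verification that the named equilibria are genuinely receiver-optimal at each belief; your proposal lays out the plan but, as you acknowledge for part~1, the constructions themselves remain to be carried out.
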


We begin by proving Lemma \ref{410}, the first result listed in Theorem \ref{main2}.

\begin{figure}
\begin{center}
\includegraphics[scale=.7]{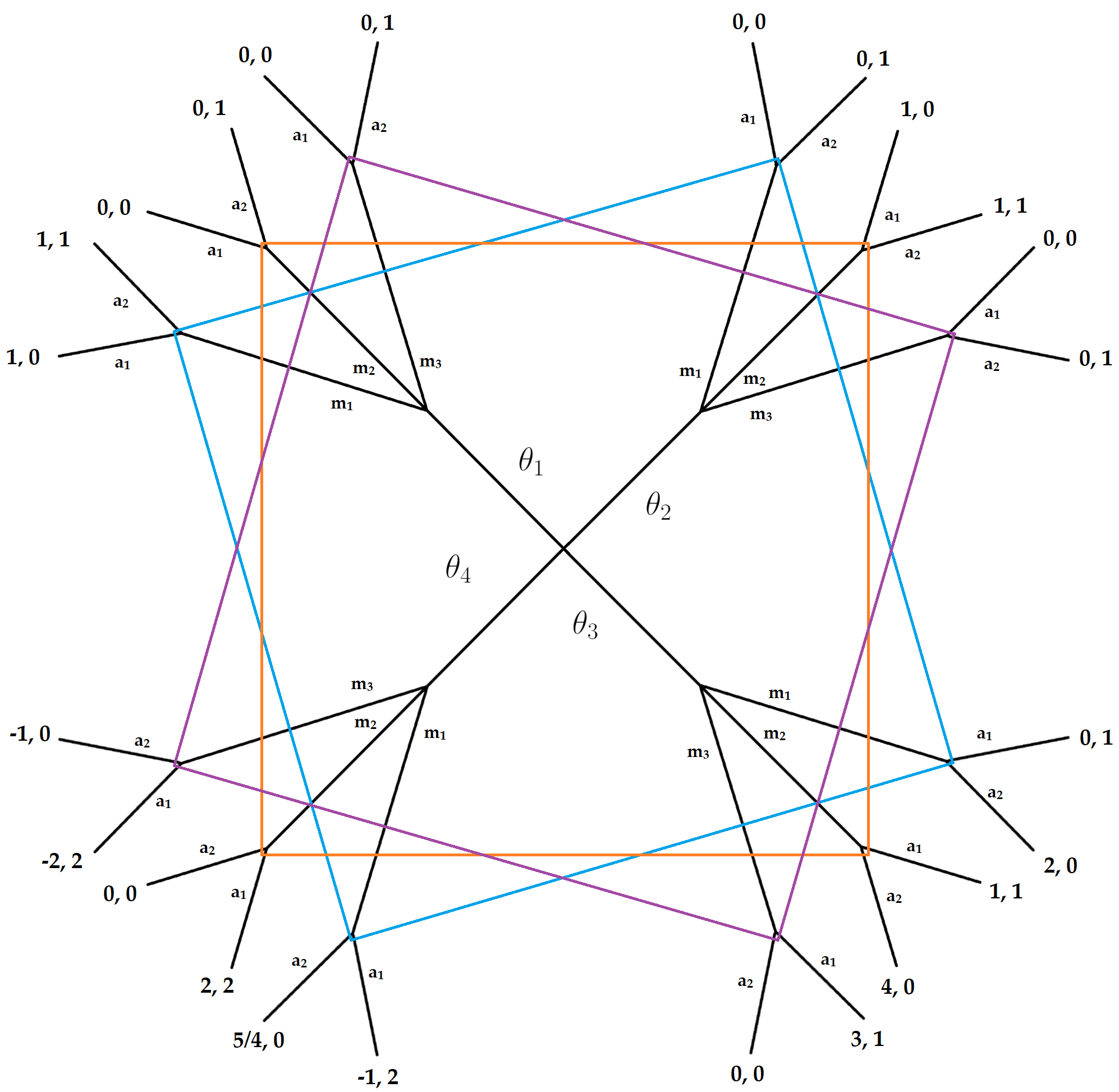}
\end{center}
\caption{\label{4c} Lemma \ref{410} Game}\end{figure}

\begin{lemma}\label{410}
In simple communication games, for four or more states, three or more messages, and two actions, the receiver's payoff is not generally convex in the prior.
\end{lemma}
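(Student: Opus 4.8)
The proof is by explicit counterexample: I would exhibit the simple communication game whose payoffs appear in Figure \ref{4c} (four states $\theta_1,\dots,\theta_4$, three messages, two actions, with the message costs chosen to make the game non--cheap talk), together with a prior $\mu_0$ and an initial experiment $\zeta$ with two realizations, and show that the receiver strictly prefers not to run $\zeta$. Writing $V(\mu)$ for the receiver's payoff in the receiver-optimal PBE when the communication game is played at prior $\mu$, the goal is to produce beliefs $\mu_A,\mu_B$ and a weight $\lambda\in(0,1)$ with $\mu_0=\lambda\mu_A+(1-\lambda)\mu_B$ and $V(\mu_0)>\lambda V(\mu_A)+(1-\lambda)V(\mu_B)$, so that $V$ fails to be convex.

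The first and main task is to pin down $V(\mu_0)$. I would proceed in two steps. By Lemma \ref{nodivide} it suffices to search among equilibria in which every on-path message recommends a single action, so that the sender's strategy is an incentive-compatible ``signal'' $\eta:\Theta\to\Delta(M)$ together with a labelling of each used message by one of the two actions; for any fixed support pattern the receiver's payoff is then a linear functional of the induced posteriors, and the receiver-optimal payoff within that pattern is the value of a small linear program (linear objective, linear sender incentive constraints, with off-path messages free to carry the most deterrent beliefs admissible in a PBE). The second step is to show that the maximizer over the finitely many patterns is a particular three-message equilibrium in which, by construction, the message costs of Figure \ref{4c} make the third message worth sending for exactly one type in order to separate from a type it would otherwise pool with; I then record the resulting number $V(\mu_0)$. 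A useful consistency check: this three-message equilibrium must strictly beat every equilibrium that uses at most two messages, for otherwise Lemma \ref{too} would already force convexity at $\mu_0$. This is precisely the feature that puts the example outside the reach of Theorem \ref{main1}.

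The second task is to choose the splitting so that the configuration achieving $V(\mu_0)$ is destroyed on both sides. I would pick $\mu_A,\mu_B$ so that at $\mu_A$ the receiver's unconditional best response (or the sign of the binding sender incentive constraint) flips, making the third message no longer usable for separation without some other type mimicking, and symmetrically at $\mu_B$; characterizing the receiver-optimal PBE at $\mu_A$ and at $\mu_B$ by the same ``linear program over support patterns'' argument yields $V(\mu_A)$ and $V(\mu_B)$, and a direct arithmetic comparison delivers $\lambda V(\mu_A)+(1-\lambda)V(\mu_B)<V(\mu_0)$. Geometrically this is the same picture as in the ``without words'' proofs of Lemmas \ref{prop22} and \ref{409}, except that now the prior sits strictly above the chord joining the two posterior payoffs.

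The hard part is the optimality verification at all three beliefs: one must rule out every alternative equilibrium, and with three messages, four states, and two actions the catalogue of support patterns (which subset of messages is on path, which types pool, which types mix, and how the off-path beliefs are assigned) is appreciably larger than in the two- and three-state lemmas. The saving grace is that it remains a finite enumeration, and conditional on a pattern the problem is a linear program, so the check is routine though lengthy; in the write-up I would present in the body only the candidate equilibria at $\mu_0$, $\mu_A$, $\mu_B$ and the resulting numerical inequality, relegating the exhaustive case analysis and the explicit payoff numbers read off Figure \ref{4c} to an appendix.
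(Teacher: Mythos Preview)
Your proposal is on the right track and, since you invoke Figure~\ref{4c}, you are working with exactly the game the paper uses. The high-level logic---exhibit a prior at which a three-message equilibrium is strictly receiver-optimal (so that Lemma~\ref{too} does not bite), then split the prior so that this equilibrium disappears---is the correct one, and your observation that the three-message equilibrium must strictly beat every two-message equilibrium at $\mu_0$ is exactly the consistency check that matters.

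Two differences from the paper's argument are worth flagging. First, you miss the structural shortcut that makes the example tractable. In the game of Figure~\ref{4c}, types $\theta_1$ and $\theta_2$ have \emph{strictly dominant} messages ($m_1$ and $m_2$, respectively) and $m_3$ is \emph{strictly dominated} for $\theta_4$. These features pin down most of the sender's behavior outright, so there is no need for your full ``linear program over all support patterns'' enumeration; the equilibrium set collapses to a handful of cases that can be dispatched by hand. The paper also restricts to a one-dimensional slice of the simplex (fixing $\mu_1=1/3$ and $\mu_3=1/8$), reducing the problem to computing a scalar function $V^{T}(\mu)$ and exhibiting its non-convexity. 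That function turns out to be piecewise linear with a downward jump at $\mu=13/36$: above the threshold no equilibrium exists in which any on-path message leads the receiver to strictly prefer $a_1$, and she is stuck with the pooling payoff.

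Second, your description of the splitting is not quite right. You propose that the good configuration be ``destroyed on both sides,'' but along this slice the region where the three-message equilibrium exists is an interval, so if both $\mu_A$ and $\mu_B$ lie outside it, their convex combination $\mu_0$ does too. The actual mechanism is a discontinuity: $\mu_0$ sits inside the good region, one realization of the initial experiment stays inside, and the other crosses the threshold into the pooling region. It is the downward jump in $V^{T}$ at $\mu=13/36$ (not a symmetric collapse) that pulls the secant strictly below $V^{T}(\mu_0)$.
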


\begin{proof}
Proof is via counter-example. There are four states, $\Theta = \left\{\theta_{1}, \theta_{2}, \theta_{3}, \theta_{4}\right\}$, and a belief is a quadruple $(\mu_{1}, \mu_{2}, \mu_{3},\mu_{4})$, where $\mu_{i} \coloneqq \Pr\left(\Theta=\theta_{i}\right)$ for all $i = 1, 2, 3, 4$ and $\mu_{1} + \mu_{2} + \mu_{3} + \mu_{4} = 1$.

The belief can be fully described with just three variables; hence, depicting the receiver's payoff as a function of the belief requires four dimensions. This is (rather) difficult to do, so instead we will restrict attention to a family of experiments that involve learning on just one dimension. That is, we fix $\mu_{1} = 1/3$ and $\mu_{3} = 1/8$, and consider only the receiver's payoff as a function of her (prior) belief about states $\theta_{2}$ and $\theta_{4}$. Learning is on just one dimension, and so (abusing notation) we rewrite the receiver's belief $\mu_{2}$ as $\mu$ and $\mu_{4}$ as $13/24 - \mu$, where $\mu \in [0,13/24]$.

In states $\theta_{1}$ and $\theta_{2}$, action $a_{2}$ is the correct action for the receiver; and in states $\theta_{3}$ and $\theta_{4}$, action $a_{1}$ is correct:

\begin{center}
\begin{tabular}{ |c|c|c|c|c| } 
\hline
Action & $\theta_{1}$ & $\theta_{2}$ & $\theta_{3}$ & $\theta_{4}$ \\ 
$a_{1}$ & $0$ & $0$ & $1$ & $2$ \\ 
$a_{2}$  & $1$ & $1$ & $0$ & $0$ \\ 
\hline
\end{tabular}
\end{center}

\medskip

Likewise, the sender's state (type)-dependent payoffs from message, action pairs are given as follows:

\medskip

\begin{center}
\begin{tabular}{ |c|c|c|c|c| } 
\hline
type & $\theta_{1}$ & $\theta_{2}$ & $\theta_{3}$ & $\theta_{4}$ \\
\hline
message & $m_{1}$ \quad $m_{2}$ \quad $m_{3}$ & $m_{1}$ \quad $m_{2}$ \quad $m_{3}$ & $m_{1}$ \quad $m_{2}$ \quad $m_{3}$ & $m_{1}$ \quad $m_{2}$ \quad $m_{3}$\\
\hline
$a_{1}$ & $1$ \quad $0$ \quad $0$ & $0$ \quad $1$ \quad $0$ & $0$ \quad $1$ \quad $3$ & $-1$ \quad $2$ \quad $-2$\\
\hline
$a_{2}$ & $1$ \quad $0$ \quad $0$ & $0$ \quad $1$ \quad $0$ & $2$ \quad $4$ \quad $0$ & $5/4$ \quad $0$ \quad $-1$\\
\hline
\end{tabular}
\end{center}

\medskip

Note that types $\theta_{1}$ and $\theta_{2}$ have messages that are \textit{strictly dominant} ($m_{1}$ and $m_{2}$, respectively), and that $\theta_{4}$ has a message that is \textit{strictly dominated} ($m_{3}$). 

\begin{figure}
    \centering
    \includegraphics[scale=.7]{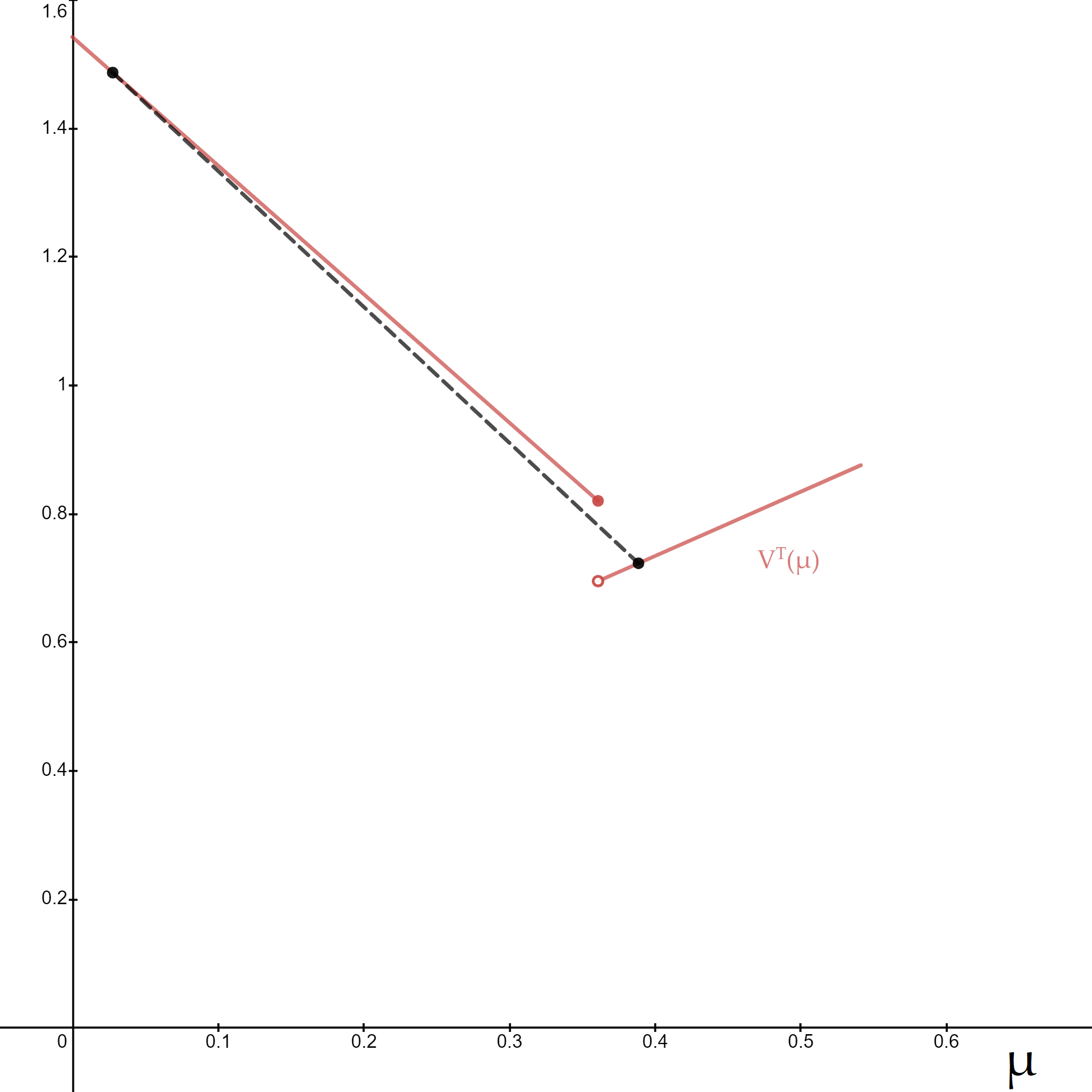}
    \caption{Receiver Payoffs (Lemma \ref{410} Proof)}
    \label{4dconvex}
\end{figure}

Figure \ref{4dconvex} depicts the receiver's equilibrium payoff as a function of $\mu$, $V^{T}$.\footnote{The super-script $T$ refers to ``transparency.'' This notation is due to the fact that in Whitmeyer (2019) \cite{Whit} we explore the value of information in the case when the receiver can choose the information structure in the ensuing game. There, we contrast the value of information in that ``optimal transparency" setting to the setting with full transparency, the focus of this paper.} Explicitly, that function is
\[V^{T}(\mu) = \begin{cases}
\frac{37}{24}- 2\mu & \mu \leq \frac{13}{36}\\
\frac{1}{3} + \mu & \frac{13}{36} < \mu \leq \frac{13}{24}
\end{cases}\]
and its derivation is left to Appendix \ref{410proof}. The receiver's payoff is no longer convex in the belief--in fact, it is no longer upper-semicontinuous. If $\mu > 13/36$--the receiver becomes too sure that the sender is not type $\theta_{3}$ or $\theta_{4}$--the only equilibria beget the pooling payoff, which correspond to no (or at least no useful) information transmission. The dotted line is a secant line that corresponds to a binary initial experiment that strictly hurts the receiver.
\end{proof}

With three or more states and actions, information may harm the receiver:

\begin{lemma}\label{3by3}
If there are at least three states, three actions and two messages then the receiver's payoff is not generally convex in the prior.
\end{lemma}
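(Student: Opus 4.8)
The plan is to prove this by counterexample; indeed, the game can be taken to be cheap talk with transparent motives, which simultaneously substantiates the stronger claim made in the introduction. Fix $\Theta=\{\theta_1,\theta_2,\theta_3\}$, $A=\{a_1,a_2,a_3\}$, and $M=\{m_1,m_2\}$. I would take $u_R\colon A\times\Theta\to\Re$ to be a (perturbed) ``matching'' payoff, so that $a_j$ is the unique best action at $\theta_j$, and write $f(\mu)\coloneqq\max_i\langle\mu,u_R(a_i,\cdot)\rangle$ for the receiver's full-information value; the off-diagonal entries of $u_R$ are the free parameters to be tuned. I would endow the sender with transparent motives and a strict ranking $u_S(a_1)>u_S(a_2)>u_S(a_3)$, so every type most wants $a_1$, then $a_2$, then $a_3$.

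The first substantive step is to pin down the PBE structure. With only two messages, any sender strategy induces at most two on-path posteriors $\mu',\mu''$ with $\lambda\mu'+(1-\lambda)\mu''=\mu_0$, and the receiver best-responds at each. Transparent motives make the sender's incentive constraints type-independent: letting $w_k$ be the sender's expected utility from the action-lottery played after $m_k$, any non-babbling equilibrium requires $w_1=w_2$. Because the ranking of actions is strict, two distinct \emph{pure} receiver responses can never satisfy $w_1=w_2$ --- this is exactly why the two-action case is convex. With three actions, by contrast, one can sustain, say, $m_1$ followed by the pure action $a_2$ and $m_2$ followed by a mixture $\gamma a_1+(1-\gamma)a_3$ with $\gamma$ solving $\gamma u_S(a_1)+(1-\gamma)u_S(a_3)=u_S(a_2)$. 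Such an equilibrium is viable precisely when $\mu'$ lies in the region where $a_2$ is a receiver best response and $\mu''$ lies on the locus of beliefs at which the receiver is indifferent between $a_1$ and $a_3$ (with $a_2$ no better there). I would then compute the receiver's payoff $V(\mu_0)$ as the upper envelope of $f(\mu_0)$ and the best such informative-equilibrium payoff, restricting $\mu_0$ to a convenient line segment through the simplex so the picture stays two-dimensional, exactly as in Lemma~\ref{410} (one could instead plot the full payoff surface over the triangle).

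The resulting $V$ should be piecewise linear with a jump: on a sub-interval of beliefs for which a Bayes-plausible split placing $\mu''$ on that indifference locus exists, the informative equilibrium strictly beats babbling and $V$ sits above $f$; once $\mu_0$ crosses the threshold past which no such split exists, only babbling remains and $V$ falls onto $f$. As in Lemma~\ref{410}, $V$ is then not upper-semicontinuous, hence not convex, and a binary initial experiment whose two posteriors straddle the threshold gives the receiver an expected payoff strictly below $V(\mu_0)$; a figure will exhibit this secant line, and the equilibrium computations together with the verification of the payoff formula are relegated to the appendix.

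The main obstacle is the middle step: characterizing the receiver-optimal equilibrium at \emph{every} belief, and in particular showing that on the ``bad'' side of the threshold no informative equilibrium survives while on the other side one exists and strictly helps the receiver. One must rule out the remaining informative configurations --- other pairs of continuation lotteries satisfying $w_1=w_2$, and receiver mixing at both messages --- and choose the entries of $u_R$ together with the action ranking of $u_S$ so that this dichotomy is sharp and the resulting secant strictly dips below the graph of $V$. Everything else --- Bayes plausibility, best-response checks, and the closed-form payoff formula --- is bookkeeping.
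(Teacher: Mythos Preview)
Your plan is a plausible route to a counterexample, but it differs from the paper's construction in one design choice that matters, and the non-convexity mechanism you are aiming for is not the one the paper exploits. You impose a \emph{strict} sender ranking $u_S(a_1)>u_S(a_2)>u_S(a_3)$, which (as you note) forces the receiver to mix in any non-babbling equilibrium; you then look for a discontinuous drop to babbling as in Lemma~\ref{410}. The paper instead makes the sender \emph{indifferent} between two of the three actions, $u_S(l)=u_S(s)=1>u_S(x)=0$, so that \emph{pure} receiver responses at both posteriors already leave every type indifferent. This buys a much cleaner equilibrium characterization: any informative equilibrium must place one posterior in the $s$-region and one in the $l$-region (anything touching $x$ is killed by a sender deviation), and the receiver-optimal equilibrium reduces to a linear program over Bayes-plausible $s$--$l$ splits.

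The failure of convexity then arises from a different mechanism than the one you sketch. In the paper's example the informative equilibrium never vanishes; what changes is the \emph{direction} of the optimal split. At $\mu_0$ and $\mu_1$ the best $s$--$l$ split has $\theta_H$ and $\theta_L$ on opposite messages with $\theta_M$ mixing; at $\mu_2$ the mass on $\theta_L$ is too large for that split to be Bayes-plausible, and the best remaining split has $\theta_H$ and $\theta_M$ on opposite messages with $\theta_L$ mixing --- a strictly less useful separation for the receiver. A direct comparison at three points, $V(\mu_0)=67/52$ versus $\tfrac12 V(\mu_1)+\tfrac12 V(\mu_2)=2195/1716$, finishes the proof without any one-parameter slice or discontinuity argument. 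Your approach could in principle be pushed through, but the step you correctly flag as the obstacle --- ruling out every alternative mixing configuration on the ``bad'' side of the threshold --- is precisely what the paper's indifference trick sidesteps.
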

\begin{proof}
Proof is via counterexample. Consider the game depicted in Figure \ref{convexfig}. \begin{figure}
\begin{center}
\includegraphics[scale=.7]{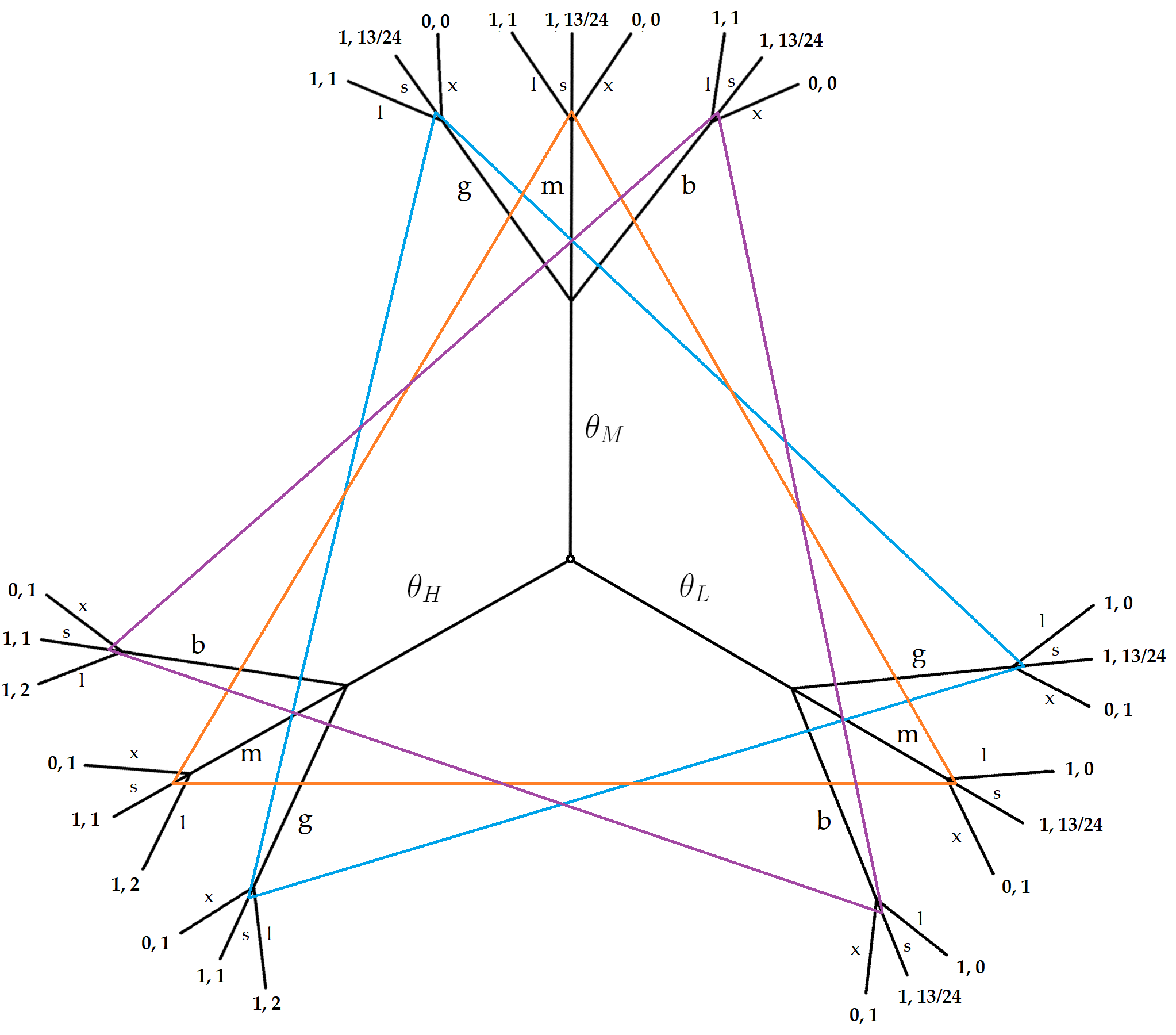}
\end{center}
\caption{\label{convexfig} Lemma \ref{3by3} Game}\end{figure}
 There are three types $\theta_{L}$, $\theta_{M}$, and $\theta_{H}$. Write a belief as a triple $(\mu_{L}, \mu_{M},\mu_{H})$. Note that the game is cheap talk with transparent motives: each type gets utility $1$ if the receiver chooses $l$ or $s$, and $0$ if the receiver chooses $x$. The receiver's preferences are given as follows:
 
 \begin{center}
\begin{tabular}{ |c|c|c|c| } 
\hline
Action & $\theta_{L}$ & $\theta_{M}$ & $\theta_{H}$ \\ 
$l$ & $0$  & $1$ & $2$ \\ 
$s$  & $13/24$ & $13/24$ & $1$ \\ 
$x$  & $1$ & $0$ & $1$ \\ 
\hline
\end{tabular}
\end{center}
 
 Consider the following three beliefs
\[\begin{split}
    \mu_{0} &\coloneqq \left(\frac{1}{4}, \frac{1}{4}, \frac{1}{2}\right), \qquad \mu_{1} \coloneqq \left(\frac{1}{12},\frac{1}{4}, \frac{2}{3}\right), \quad \text{and} \quad \mu_{2} \coloneqq \left(\frac{5}{12},\frac{1}{4}, \frac{1}{3}\right)
\end{split}\]
and note that $\mu_{0}$ is a convex combination of $\mu_{1}$ and $\mu_{2}$, each with weight $1/2$. That is, for some prior $\mu_{0}$, $\mu_{1}$ and $\mu_{2}$ are the realizations of a binary initial experiment, $\zeta$.

We depict the three prior distributions in the $(x,y)$-coordinate plane, where the $x$-axis corresponds to $\mu_{H}$, and the $y$-axis corresponds to $\mu_{M}$. There exist three convex regions of beliefs, $l$, $s$, and $x$, in which actions $l$, $s$, and $x$, respectively, are optimal. These regions and the three beliefs, $\mu_{0}$, $\mu_{1}$, and $\mu_{2}$, are illustrated in Figure \ref{convex1}.

\begin{figure}
\begin{center}
\includegraphics[scale=.7]{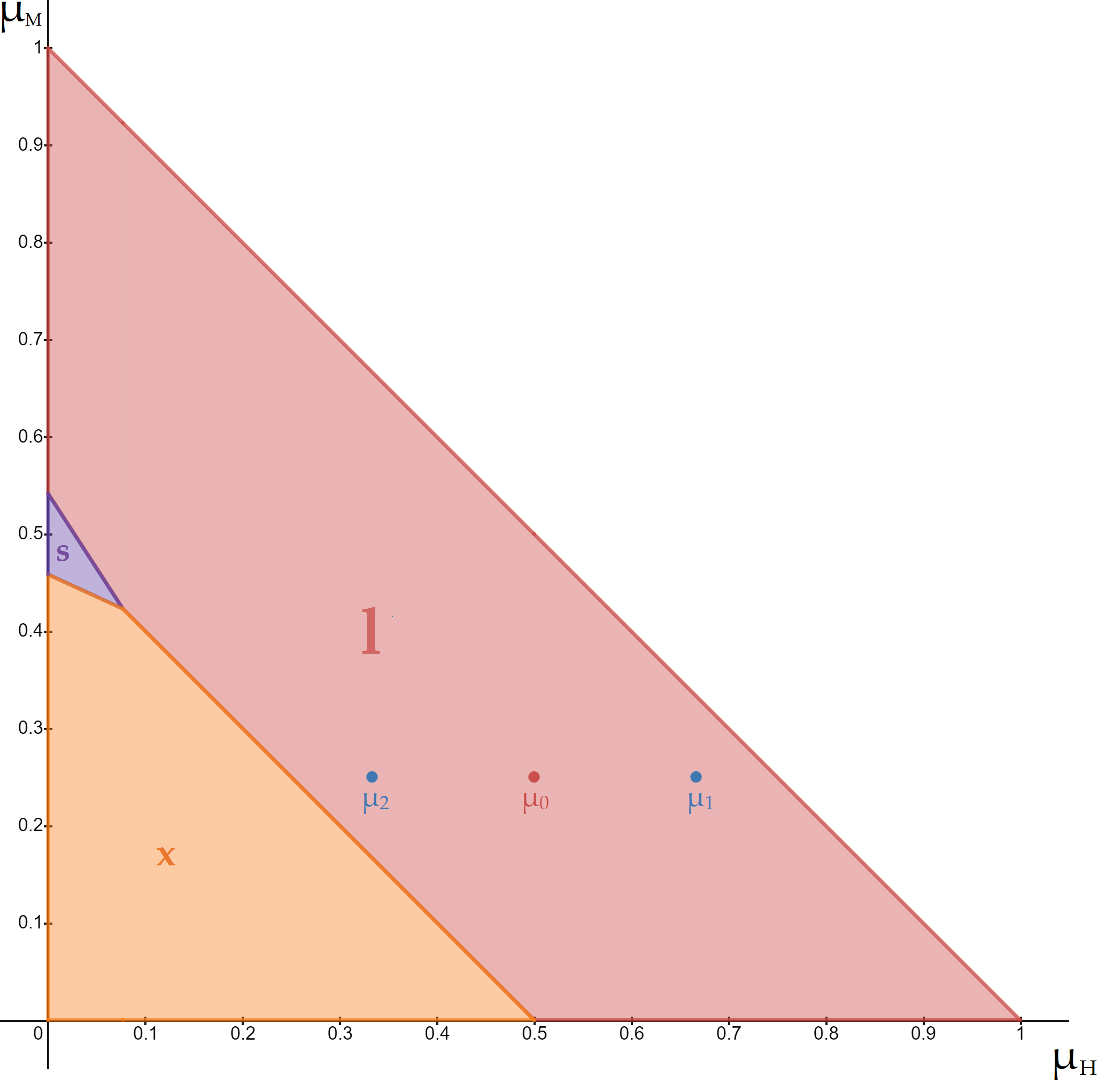}
\end{center}
\caption{\label{convex1} Lemma \ref{3by3} Action Regions}\end{figure}

After some effort (relegated to Appendix \ref{3by3proof}), we conclude that at beliefs $\mu_{0}$ and $\mu_{1}$ the receiver-optimal equilibrium is one in which $\theta_{H}$ and $\theta_{L}$ choose different messages, say $g$ and $b$, respectively; and $\theta_{M}$ mixes between those messages ($g$ and $b$). The receiver's payoffs at these beliefs are $67/52$ and $83/52$, respectively.

In contrast, at belief $\mu_{2}$, such an equilibrium does not exist. Instead, the receiver-optimal equilibrium sees $\theta_{H}$ and $\theta_{M}$ choose different messages, say $g$ and $m$, respectively; and $\theta_{L}$ mix between those messages ($g$ and $m$). The receiver's payoff is $127/132$.

The posteriors corresponding to the y-equilibrium experiments for $\mu_{1}$ and $\mu_{2}$ and the null-optimal experiment are depicted in Figure \ref{convex2}, where each $x$ denotes a posterior distribution.

\begin{figure}
\begin{center}
\includegraphics[scale=.7]{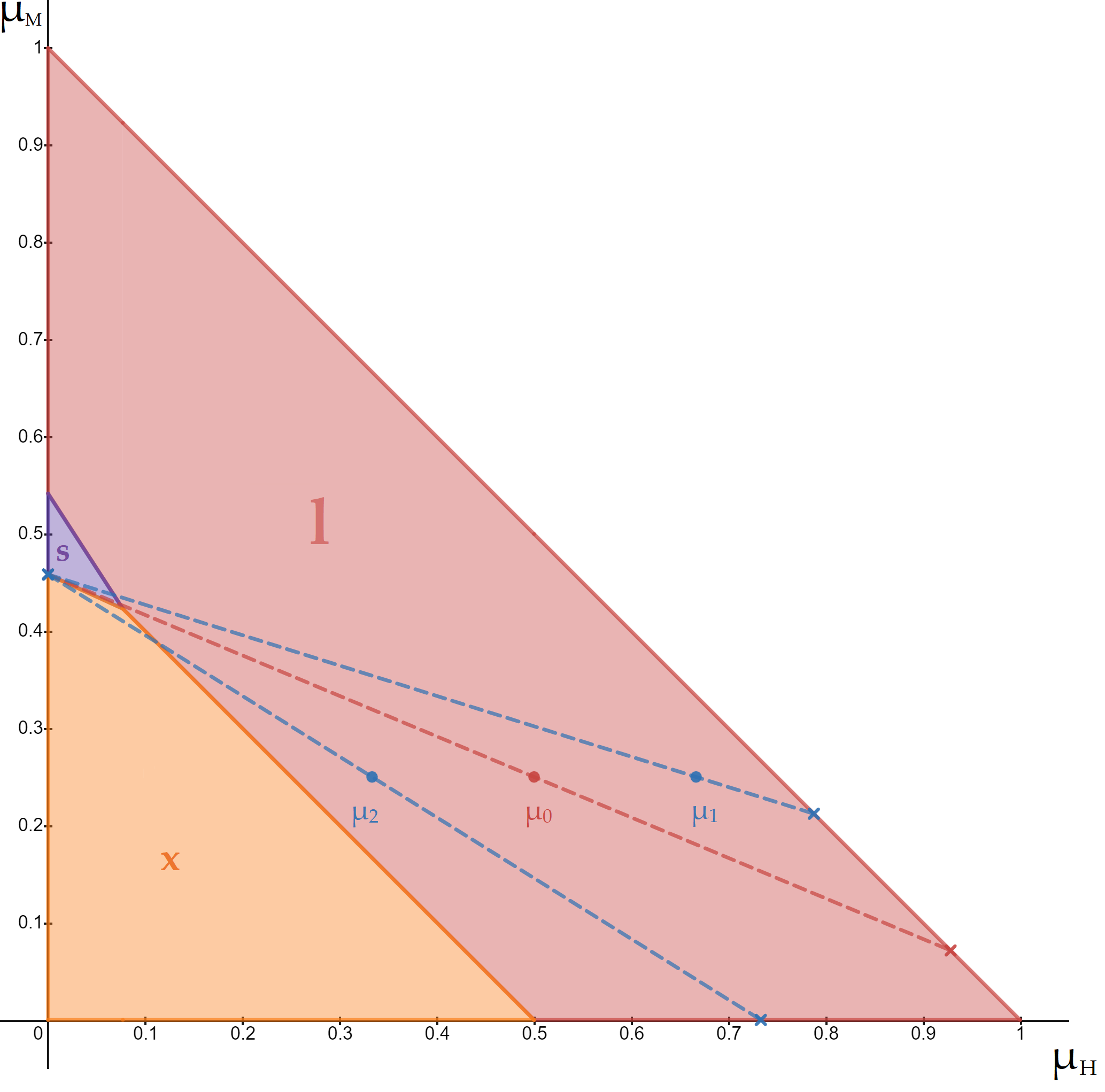}
\end{center}
\caption{\label{convex2} Lemma \ref{3by3} Optimal Posteriors}\end{figure}

Finally, we can directly calculate and compare the receiver's expected payoff from this information acquisition to her payoff without obtaining the information:
\[\frac{67}{52} > \frac{2195}{1716} = \frac{1}{2}\cdot \frac{83}{52} + \frac{1}{2}\cdot \frac{127}{132}\]
whence we conclude that the receiver's optimal equilibrium payoff is not convex in the prior. Note that this game is a cheap talk game with transparent motives--even these restrictions are not enough to guarantee convexity. 
\end{proof}

An analog to Figure \ref{case1} is depicted in Figure \ref{big}. There, the red point corresponds to the prior $\mu_{0}$, the blue arrows and points to the initial experiment and posteriors, the yellow arrows and points to the null-optimal experiment, the green arrows to the y-equilibrium experiments, and the purple arrows and points to experiment $\xi$.

\begin{figure}
\begin{center}
\includegraphics[scale=.7]{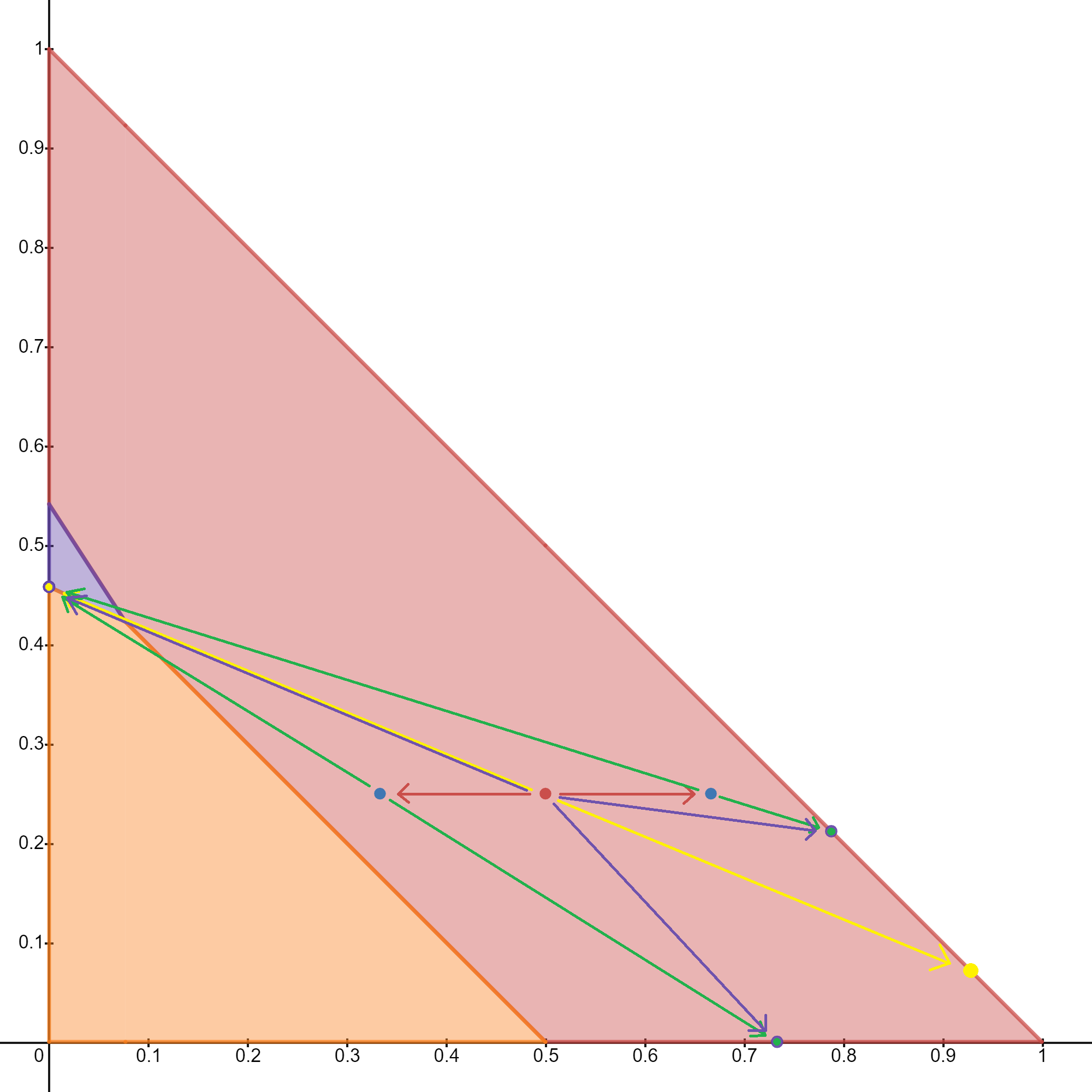}
\end{center}
\caption{\label{big} Lemma \ref{3by3} Induced Experiments}\end{figure}

What goes wrong when there are three or more states and actions? Recall the two state case. In such a setting, because there are only two states, the actors' beliefs are one dimensional. As a result, any additional information can only shift the belief to the left or right on the one-dimensional simplex of beliefs. Moreover, because of the of lack of diversity of sender types, the set of possible equilibrium vectors of strategies is quite small (qualitatively)--they either pool, separate, both mix, or only one mixes. Consequently, a change in the prior cannot have too great of an effect on the resulting equilibrium distribution of posteriors: as long as the change is in the correct direction (and remember, there are only two possible directions) and/or is sufficiently small, the receiver-optimal equilibrium from the original prior remains feasible, and hence the same vector of posteriors can be generated at equilibrium (albeit with different probabilities).

Furthermore, any change in the prior that eliminates the receiver-optimal equilibrium from the original prior must be large, so large that the resulting belief is more extreme than any posterior generated by the original equilibrium. Thus, the beliefs that correspond to $\xi$ must be the same, or more extreme than the beliefs from $\eta$, and hence learning must be beneficial. Put another way, there is a trade-off to initial learning--the gain in information from the initial experiment versus the (possibly decreased) gain in information from the receiver-optimal equilibria at the new priors. With two states, the first effect dominates, and makes up for the fact that the gain in information from the equilibria may be diminished.

With more than two states, this is no longer true. As in the two state case, initial learning can result in priors for the communication game for which the receiver-optimal equilibrium under the original prior is no longer feasible. However, due to the fact that the belief space is now multi-dimensional, the beliefs generated by the receiver-optimal equilibrium at these new priors, while more extreme in \textit{some} direction, do not correspond in general to a more informative experiment. Hence, $\xi$ and $\eta$ may not be Blackwell comparable, in which case comparisons of the receiver welfare between the no learning and learning scenarios must rely on the specific details of the initial experiment and payoffs of the game.  As in the two state case, there is the same trade-off to initial learning, but now the initial gain may not dominate.

In the counterexample constructed in Lemma \ref{3by3}, the proposed initial experiment is harmful, since it involves too much learning about whether the state is $\theta_{L}$. In particular, for belief $\mu_{2}$, the receiver is too confident that the state is $\theta_{L}$, which precludes the existence of an equilibrium in which the receiver can distinguish between the high type and the low type. Instead, the receiver-optimal equilibrium is one in which she can distinguish between the high type and the medium type, which is much less helpful for the receiver. 

As discussed in Whitmeyer (2019) \cite{Whit}, the value of information may not be positive in this game even when the receiver can choose the optimal information structure in the signaling game. As we discover in Whitmeyer (2019), the equilibria described above at beliefs $\mu_{0}$, $\mu_{1}$, $\mu_{2}$, yield the maximum payoffs to the receiver of any equilibrium under any information structure.

Finally, if there are only two states and actions, the receiver's payoff may fail to be convex in the prior if the game is not simple. To wit,

\begin{lemma}\label{ls}
If the communication game is not simple, then the receiver's payoff is not generally convex in the prior.
\end{lemma}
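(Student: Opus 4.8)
\emph{Proof sketch (plan).} The plan is to build a counterexample with two states $\Theta=\{\theta_1,\theta_2\}$, two actions $A=\{a_1,a_2\}$, and two messages $M=\{m_g,m_b\}$, where non-simplicity enters only through a bonus the receiver collects whenever the message $m_g$ is sent: set $u_R(m_g,a,\theta)=v(a,\theta)+B$ and $u_R(m_b,a,\theta)=v(a,\theta)$, with $v$ the usual matching payoff ($v(a_i,\theta_i)=1$ and $v(a_i,\theta_j)=0$ for $i\neq j$) and $B>0$ a large constant. I would choose the sender's payoffs so that (a) type $\theta_1$ strictly prefers $m_g$ to $m_b$ under every response of the receiver, while (b) type $\theta_2$ ranks the outcome $(m_g,a_1)$ first, $(m_g,a_2)$ last, and $m_b$ (under either action) strictly in between, e.g. $u_S(m_g,a_1,\theta_2)=2$, $u_S(m_g,a_2,\theta_2)=0$, $u_S(m_b,\cdot,\theta_2)=1$. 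Writing a belief as $t=\mu(\theta_1)$, the receiver's best response to $m_g$ is $a_1$ when $t>\tfrac12$ and $a_2$ when $t<\tfrac12$, independently of $B$; set $\hat t=\tfrac12$. Note that in a \emph{simple} two-state, two-action game Lemma \ref{prop22} already delivers convexity via a Blackwell comparison; non-simplicity breaks that argument because the receiver's payoff now depends on which message is used, not merely on the informativeness of the induced experiment, and the counterexample must exploit exactly this.

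The next step is to classify equilibria. Since $\theta_1$ strictly prefers $m_g$ no matter how the receiver reacts, in every equilibrium $\theta_1$ sends $m_g$ with probability one, so the only freedom is $\theta_2$'s behaviour: pool on $m_g$, separate to $m_b$, or mix. I would verify that when $t>\hat t$, pooling on $m_g$ is the (essentially unique) equilibrium: the receiver plays $a_1$ on $m_g$, which is $\theta_2$'s best outcome, so $\theta_2$ will not deviate for any off-path belief; separation fails because $\theta_2$ would mimic; and partial pooling fails because the posterior on $m_g$ would be strictly above $\hat t$, so the receiver cannot be made to randomize on $m_g$. When $t<\hat t$, pooling on $m_g$ fails ($\theta_2$ would face $a_2$, her worst outcome, and strictly prefers $m_b$), separation fails as before, and the unique equilibrium has $\theta_1$ sending $m_g$, $\theta_2$ mixing just enough to drag the posterior on $m_g$ down to exactly $\hat t$ (this pins her weight on $m_g$ uniquely), and the receiver randomizing equally over $a_1,a_2$ on $m_g$; then $m_b$ is sent with probability $(\hat t-t)/\hat t$, and on $m_b$ the receiver forgoes the bonus $B$.

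Evaluating the equilibrium payoff then yields $V(\mu)=t+B$ for $t\geq\hat t$ and $V(\mu)=\tfrac{t}{\hat t}(\hat t+B)+\tfrac{\hat t-t}{\hat t}d$ for $t\leq\hat t$, where $d:=\max_a v(a,\theta_2)=1$ is the receiver's no-bonus payoff on $m_b$. Both pieces are affine, they agree at $t=\hat t$, and for $B$ large enough (with the normalization above, $B>1$) the left-hand slope strictly exceeds the right-hand slope, so $V$ has a downward kink at $\hat t$ and is not convex. Any binary experiment that splits a prior $\mu_0$ with $\hat t<t_0<1$ into one posterior still above $\hat t$ and one strictly below then strictly hurts the receiver, since $V$ coincides with the affine function $t\mapsto t+B$ at $\mu_0$ and at the high posterior but lies strictly below it at the low posterior. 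Concretely, with $t_1=\tfrac34$, $t_2=\tfrac14$ and $\mu_0=\tfrac34\mu_1+\tfrac14\mu_2$ (so $t_0=\tfrac58$) one gets $V(\mu_0)=\tfrac58+B>\tfrac34+\tfrac78B=\tfrac34V(\mu_1)+\tfrac14V(\mu_2)$ whenever $B>1$.

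The main obstacle is the low-belief case: I must rule out \emph{every} equilibrium there that would let the receiver recapture the full $m_g$-bonus, i.e. show the receiver cannot induce $\theta_2$ to send $m_g$ with probability one when $t<\hat t$. This is precisely where non-simplicity does the work: the receiver would love everyone to pool on her favourite message $m_g$, but at a belief tilted toward $\theta_2$ her own optimal response to $m_g$ is $a_2$, which $\theta_2$ ranks below the neutral message $m_b$, so $\theta_2$ defects and the bonus is partly lost. Pinning down the unique partial-pooling equilibrium and confirming it is the receiver's best option (the only alternative, pooling on $m_b$, also forgoes the bonus) is the crux; the remainder is bookkeeping with the matching payoffs and the threshold $\hat t$.
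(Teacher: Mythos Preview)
Your construction is correct and complete: the equilibrium classification is right (for $t<\tfrac12$ the partial-pooling equilibrium is the \emph{unique} PBE, since $\theta_1$'s dominant message rules out pooling on $m_b$, pooling on $m_g$ is broken by $\theta_2$'s deviation, and separation is broken by $\theta_2$'s mimicry), the payoff computations check out, and the kink at $\hat t$ indeed points downward whenever $B>1$.

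The paper takes the same overall route---a $2\times 2\times 2$ counterexample in which the receiver has a direct stake in one message---but the details differ. The paper modifies Beer--Quiche by giving the receiver a bonus of $1$ for the \emph{correct action} conditional on the sender choosing Quiche; the resulting value function is
\[
V^{T}(\mu)=\begin{cases}2-3\mu,&\mu<\tfrac12\\ 2\mu,&\mu\geq\tfrac12\end{cases}
\]
which fails convexity through a \emph{jump discontinuity} at $\mu=\tfrac12$ (it is not even lower semicontinuous). Your bonus is unconditional on the action, which yields a \emph{continuous} $V$ with a downward kink instead. Your version makes the mechanism slightly more transparent---the receiver simply wants everyone on $m_g$, and non-simplicity bites because at low $t$ the equilibrium forces some play of $m_b$---at the cost of needing a free parameter $B>1$. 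The paper's version ties the non-simplicity to the action choice and gets the failure with fixed numerical payoffs, anchoring the example in a familiar game. Either construction proves the lemma.
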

\begin{proof}
Proof is via counter example. Consider the modified Beer-Quiche game (\textit{cf.} Cho and Kreps (1987) \cite{cho}) depicted in Figure \ref{nonsimpcount}, in which the receiver now obtains an additional payoff of $1$ if the sender chooses Quiche and the receiver chooses the ``correct" action (i.e. $F$ if the sender is $\theta_{W}$ and $NF$ if the sender is $\theta_{S}$). 

\begin{figure}
    \centering
    \begin{tikzpicture}[scale=1.4,font=\footnotesize]
\tikzset{
solid node/.style={circle,draw,inner sep=1.5,fill=black},
hollow node/.style={circle,draw,inner sep=1.5}}
\tikzstyle{level 1}=[level distance=12mm,sibling distance=25mm]
\tikzstyle{level 2}=[level distance=15mm,sibling distance=15mm]
\tikzstyle{level 3}=[level distance=17mm,sibling distance=10mm]
\node(0)[hollow node]{}
child[grow=up]{node[solid node,label=above:{$\theta_{W}$}] {}
child[grow=left]{node(1)[solid node]{}
child{node[solid node,label=left:{$(0, 1)$}]{} edge from parent node [above]{$F$}}
child{node[solid node,label=left:{$(2, 0)$}]{} edge from parent node [below]{$NF$}}
edge from parent node [above]{$B$}}
child[grow=right]{node(3)[solid node]{}
child{node[solid node,label=right:{$(3, 0)$}]{} edge from parent node [below]{$NF$}}
child{node[solid node,label=right:{$(1, 2)$}]{} edge from parent node [above]{$F$}}
edge from parent node [above]{$Q$}}
edge from parent node [right]{$1-\mu$}}
child[grow=down]{node[solid node,label=below:{$\theta_{S}$}] {}
child[grow=left]{node(2)[solid node]{}
child{node[solid node,label=left:{$(1, 0)$}]{} edge from parent node [above]{$F$}}
child{node[solid node,label=left:{$(3, 1)$}]{} edge from parent node [below]{$NF$}}
edge from parent node [above]{$B$}}
child[grow=right]{node(4)[solid node]{}
child{node[solid node,label=right:{$(2, 2)$}]{} edge from parent node [below]{$NF$}}
child{node[solid node,label=right:{$(0, 0)$}]{} edge from parent node [above]{$F$}}
edge from parent node [above]{$Q$}}
edge from parent node [right]{$\mu$}};
\draw[dashed,rounded corners=10]($(1) + (-.45,.45)$)rectangle($(2) +(.45,-.45)$);
\draw[dashed,rounded corners=10]($(3) + (-.45,.45)$)rectangle($(4) +(.45,-.45)$);
\node at ($(1)!.5!(2)$) {$R$};
\node at ($(3)!.5!(4)$) {$R$};
\end{tikzpicture}
    \caption{Lemma \ref{ls} Game}
    \label{nonsimpcount}
\end{figure}

If $\mu \geq 1/2$, the receiver optimal equilibrium is one in which the sender types pool on $Q$. The receiver's best response is $NF$ and his expected payoff is $2\mu$. If $\mu < 1/2$, the receiver optimal equilibrium is one in which $\theta_{W}$ mixes, choosing $B$ with probability $\sigma = \mu/(1-\mu)$, and $\theta_{S}$ chooses $B$. The receiver's expected payoff for $\mu < 1/2$ is thus $\Pr(B)/2 + 2\Pr(Q) = 2 - 3\mu$. Hence,

\[V^{T}(\mu) = \begin{cases}
2-3\mu, & \mu < \frac{1}{2}\\
2 \mu, & \mu \geq \frac{1}{2}
\end{cases}\]

Figure \ref{nonsimp} depicts $V^{T}$. The receiver's payoff is neither convex nor lower semi-continuous. The dotted line in the figure is a secant line that corresponds to a binary initial experiment that strictly hurts the receiver.
\begin{figure}
\begin{center}
\includegraphics[scale=.7]{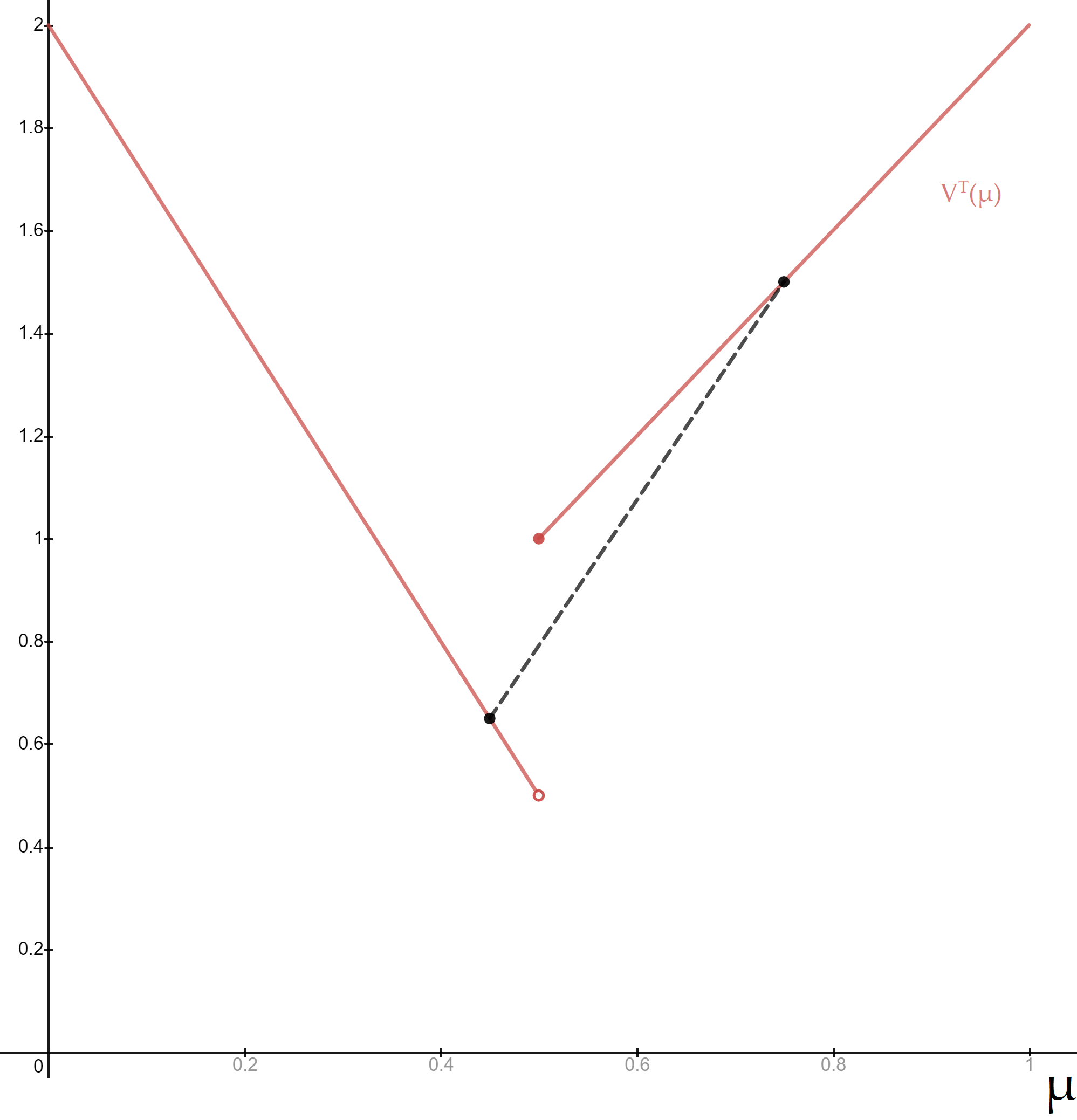}
\end{center}
\caption{\label{nonsimp} Receiver Payoffs (Lemma \ref{ls} Proof)}\end{figure}

\end{proof}

\section{Conclusion}\label{conclusion}

This paper comprehensively answers the question of when information always benefits the receiver in two player communication games. As Theorem \ref{main2} illustrates, Theorem \ref{main1} is as strong as possible--should none of its conditions hold, the value of information may be strictly negative.

Naturally, there is room for more work on related questions. What can we say, for instance, about the value of information for the sender? Answering such a question would pose a challenge since the proof techniques used in this paper would no longer work. Here we are able to bypass the details of the sender's incentives and work with distributions of beliefs. This allows us to tackle the problem as a decision problem for the receiver, in which we apply Blackwell's theorem. Such an approach would not work when exploring sender welfare because he is not a decision maker.

\bibliography{sample.bib}
\appendix
\section{Proofs}
\subsection{Lemma \ref{prop22} Proof}\label{22proof}
\begin{proof}

If there exists a separating equilibrium in the game then the result is trivial; henceforth we consider only simple signaling games that do not admit separating equilibria. Since there are just two states of the world, $\Theta \coloneqq \left\{\theta_{L},\theta_{H}\right\}$, any belief (probability distribution over states) can be completely characterized by the parameter $\mu \coloneqq \Pr(\Theta = \theta_{H})$. The interval of beliefs can be partitioned into finitely many partitions with boundaries $0 < \mu_{1} < \mu_{2} < \cdots < \mu_{k} < 1$. If an action is optimal for a receiver given some belief, $\mu$, then it is either optimal only at that belief or it is optimal for a closed interval of beliefs $\left[\mu_{i}, \mu_{j}\right]$ of which $\mu$ is a member. 

We first show that it is without loss of generality to restrict the sender to two messages.

\begin{claim}\label{steak}
For any equilibrium that yields the receiver a payoff of $v$ in which $m > 2$ messages are used, there exists an equilibrium in which at most $2$ messages are used that yields the receiver a payoff that is weakly higher than $v$.
\end{claim}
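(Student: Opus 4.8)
The plan is to work with the distribution over posterior beliefs that the sender's strategy induces, and to exploit the fact that in a \emph{simple} game the receiver's continuation payoff after a message depends on the message only through that posterior. Fix an equilibrium using $m>2$ messages and giving the receiver payoff $v$. For a used message $m'$, let $\mu_{m'}$ be its Bayesian posterior and $p_{m'}$ its probability; sequential rationality makes the receiver's continuation payoff after $m'$ equal to $\hat v(\mu_{m'})$, where $\hat v(\mu):=\max_{a\in A}\mathbb{E}_{\theta\sim\mu}[u_R(a,\theta)]$. Being a pointwise maximum of affine functions, $\hat v$ is convex (and piecewise linear, with the receiver's action regions as its linearity cells). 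Bayes plausibility gives $\sum_{m'}p_{m'}\mu_{m'}=\mu_0$ and $v=\sum_{m'}p_{m'}\hat v(\mu_{m'})$. (Simplicity is what makes this message‑by‑message payoff depend only on the posterior; without it the argument breaks.)

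I would then collapse this distribution onto two points. Let $\underline\mu$ and $\overline\mu$ be the smallest and largest posteriors among used messages, attained at used messages $\underline m$ and $\overline m$, and let $\underline p,\overline p$ be the weights of the unique distribution on $\{\underline\mu,\overline\mu\}$ with mean $\mu_0$. Every used posterior lies in $[\underline\mu,\overline\mu]$, so comparing $\hat v$ on that interval with the affine function agreeing with it at the endpoints yields $\underline p\,\hat v(\underline\mu)+\overline p\,\hat v(\overline\mu)\geq\sum_{m'}p_{m'}\hat v(\mu_{m'})=v$. Thus any equilibrium inducing exactly the posteriors $\underline\mu,\overline\mu$ (with the receiver best responding) would deliver the receiver at least $v$, and such an equilibrium uses at most two messages.

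The substantive step — and the one I expect to be the main obstacle — is producing such an equilibrium. I would reuse the receiver's original responses $\rho(\cdot\mid\underline m)$ and $\rho(\cdot\mid\overline m)$, which are best replies to $\underline\mu$ and $\overline\mu$; split each sender type's mass between $\underline m$ and $\overline m$ so that the posteriors are $\underline\mu,\overline\mu$ and the message weights are $\underline p,\overline p$ (a one‑line check shows the required type‑conditional probabilities lie in $[0,1]$); and give every other message an off‑path belief that makes deviating to it unprofitable. What remains is each type's willingness to follow its prescribed mixture on $\{\underline m,\overline m\}$, and here I would lean on the structural fact that $\overline m$ — having the largest posterior — lies in $\theta_H$'s support in the original equilibrium (it is sent either by both types or only by $\theta_H$), and symmetrically $\underline m$ lies in $\theta_L$'s support; and that unless the original posteriors were already $\{0,1\}$ (the separating case, excluded, and in any event harmless since it uses two messages), each extreme message that is interior is sent by both types. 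A short case split on which of $\underline\mu,\overline\mu$ is degenerate then shows that whichever type the new profile asks to randomize had both $\underline m$ and $\overline m$ in its original support — so both already yielded it its equilibrium payoff, giving the required indifference — while the type (if any) asked to play a pure message was already earning its equilibrium payoff from that message and weakly less from the other, so has no profitable deviation. The delicate part throughout is the incentive bookkeeping: checking that simply inheriting the old receiver responses and the old (in)equalities really does sustain the two‑message profile. Granting that, the profile is an equilibrium and, by the previous paragraph, gives the receiver at least $v$.
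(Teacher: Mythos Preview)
Your proposal is correct and follows essentially the same route as the paper: collapse the induced posterior distribution onto its two extreme points, note this weakly raises the receiver's payoff (you via convexity of $\hat v$, the paper via Blackwell informativeness --- equivalent in simple games), and verify the two-message profile is still an equilibrium because the extreme messages already lay in the relevant types' original supports, so the sender's indifferences and the receiver's best replies carry over unchanged. The paper's version is slightly terser --- it first classifies the possible $l$-message equilibria into ``both types full support'' and ``one type omits exactly one message'' before moving weight --- but the substance is identical.
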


\begin{proof}
Suppose that $l > 2$ messages are used. Since there exist no separating equilibria, there are only two feasible $l$-message equilibria: i. Both types choose a mixed strategy with full support, or ii. One type (say $\theta_{H}$) chooses a mixed strategy with full support, and the other type chooses a mixed strategy with support on all but one message. 

In both cases, there will be $l$ resulting equilibrium beliefs $\mu_{1}' < \mu_{2}' < \cdots < \mu_{l}'$, where in case ii. $\mu_{l}' = 1$. However since each type is mixing, they must be indifferent over each message in the support of their mixed strategy. Hence, there must also be an equilibrium in both cases in which only two messages are used, which induce beliefs $\mu_{1}'$ and $\mu_{l}'$. Indeed such an equilibrium can be constructed by taking each on-path message $m_{i}$ with the associated induced belief $\mu_{i}'$, with $i \neq 1, l$, and moving weight from each player's mixed strategy on $m_{i}$ to message $m_{l}$ at the ratio

\[\frac{\Delta\left(\sigma_{H}(m_{i})\right)}{\Delta\left(\sigma_{L}(m_{i})\right)} = \frac{\Delta\left(\sigma_{H}(m_{l})\right)}{\Delta\left(\sigma_{L}(m_{l})\right)} = \frac{(1-\mu_{0})\mu_{l}'}{(1-\mu_{l}')\mu_{0}}\]
Such a process decreases $\mu_{i}'$ and by construction maintains $\mu_{l}'$. This can be done until $\mu_{i}' = \mu_{1}'$ for each $i$.

The Blackwell experiment that corresponds to this new, binary, distribution of posteriors is more informative than in the original situation, where $l$ messages were used. Hence, the receiver's payoff must be weakly higher in the two-message equilibrium.
\end{proof}

Thus, suppose that there are just two messages in the game. There are three cases to consider, each of which is illustrated in Figure \ref{case1}.

\textbf{Case 1:} For prior $\mu_{0}$, the receiver-optimal equilibrium is one in which both types pool. Observe that in this case, the null-optimal experiment is a completely uninformative experiment.

Consider any initial experiment $\zeta$ with $k \geq 2$ realizations. Following this experiment, there are $k$ posteriors, $\mu_{y}$, $y \in \left\{1,\dots,k\right\}$; or equivalently there are $k$ priors in the resulting signaling game.

For each realization of the initial experiment, $y$, the receiver's equilibrium payoff in the resulting signaling game is clearly bounded below by the payoff from a pooling equilibrium, since that corresponds to the least informative (in the Blackwell sense) y-equilibrium experiment. Hence, suppose that in each of the $k$ signaling games, there is a pooling equilibrium, and that is the receiver-optimal equilibrium.

Finally, define $\xi$ as the experiment that corresponds to the information ultimately acquired by the receiver following the initial learning and the resulting equilibrium play in the signaling game. The null-optimal experiment, $\eta$, is less Blackwell informative than $\xi$ and so the receiver prefers $\xi$--the receiver prefers any learning.

\textbf{Case 2:} For prior $\mu_{0}$, the receiver optimal equilibrium is one in which one type mixes and the other type chooses a pure strategy. Observe that in this case, the null-optimal experiment begets two posteriors: one that is in the interior on $[0,1]$ and the other that is either $0$ or $1$.

Without loss of generality (the other cases follow analogously) suppose that $\theta_{H}$ mixes and chooses message $m_1$ with probability $\sigma$ and $\theta_{L}$ chooses message $m_1$. Following an observation of message $m_2$, the receiver's belief is $1$ and following message $m_1$ it is $\mu_{j} < \mu_{0}$. Moreover, using Bayes' law we obtain \[\sigma = \frac{(1-\mu_{0})\mu_{j}}{(1-\mu_{j})\mu_{0}} \]

Consider any initial experiment, $\zeta$, with $k \geq 2$ realizations. Observe that for any realization that yields a belief $\mu_{i} \geq \mu_{j}$, an equilibrium in which $\theta_{H}$ mixes and $\theta_{L}$ does not must also exist, and hence the receiver's equilibrium payoffs for each of these beliefs must be bounded below by the payoff for that equilibrium. As in case $1$, suppose that in each case that this equilibrium is optimal (and hence that the receiver's payoffs are at their lower bounds).

For each $y$ such that $\mu_{y} \geq \mu_{j}$, the y-equilibrium experiment is one that sends the posteriors to $\mu_{j}$ and $1$. Consequently, it is without loss of generality to suppose that $\zeta$ just has a single experiment realization that yields a belief above $\mu_{j}$. Moreover, as in the first case, any realization of experiment  $\zeta$ that yields a posterior $\mu_{y} < \mu_{j}$ must beget an equilibrium payoff bounded below by the pooling payoff. Hence, we suppose that for each such realization $y$, the optimal equilibrium is the pooling equilibrium. Moreover, the resulting payoff from this distribution over pooling payoffs itself is bounded below by the payoff were the initial experiment to have merely a single signal $y$ that begets a belief below $\mu_{j}$. Accordingly, we suppose that is the case.

To summarize, $\zeta$ has just two signal realizations, $y_{1}$ and $y_{2}$, corresponding to beliefs $\mu_{1} < \mu_{j}$ and $\mu_{2} > \mu_{j}$, respectively. $\gamma_{1}$ has just one signal realization, corresponding to belief $\mu_{1}$. $\gamma_{2}$ has two signal realizations, corresponding to beliefs $\mu_{j}$ and $1$. Hence, $\xi$ has three signal realizations, corresponding to beliefs $\mu_{1}, \mu_{j}$ and $1$. The null-optimal experiment $\eta$ has two signal realizations, corresponding to beliefs $\mu_{j}$ and $1$.

The resulting distribution over posteriors induced by $\eta$ has support on $1$ and $\mu_{j}$. Likewise, the resulting distribution over posteriors induced by $\xi$ has support on $1$, $\mu_{j}$ and $\mu_{1}$. Since $\mu_{1} < \mu_{j}$, $\xi$ is more Blackwell informative than $\eta$ and so the receiver prefers $\xi$--the receiver prefers learning.

\textbf{Case 3:} For prior $\mu_{0}$, the receiver optimal equilibrium is one in which both types mix. Observe that in this case, the null-optimal experiment begets two posteriors, both of which are in the interior of $[0,1]$.

Let the high type choose a mixed strategy $\sigma_{H}$ and let the low type choose a mixed strategy $\sigma_{L}$. The receiver will have two posteriors, $\mu_{j} > \mu_{0} > \mu_{l}$ and using Bayes' law, we have

\[\sigma_{H} = \frac{\mu_{j}\left(\mu_{0}-\mu_{l}\right)}{\mu_{0}\left(\mu_{j}-\mu_{l}\right)}, \quad \text{and} \quad \sigma_{L} = \frac{\left(1-\mu_{j}\right)\left(\mu_{0}-\mu_{l}\right)}{\left(1-\mu_{0}\right)\left(\mu_{j}-\mu_{l}\right)} \]
The remainder proceeds in the same way as in the first two cases, any experiment realization that yields a belief in the interval $[\mu_{l},\mu_{j}]$ leads to an equilibrium payoff bounded below by the optimal equilibrium payoff at belief $\mu_{0}$, and any experiment realization that yields a belief outside that interval leads to an equilibrium payoff bounded below by the pooling payoff. 

Ultimately, the null-optimal experiment, $\eta$, is less Blackwell informative than $\xi$ (which corresponds to an information structure that serves as a \textit{lower-bound} for the receiver's payoff from learning), so learning must always be beneficial. 

We have gone through each case, and the result is shown.
\end{proof}

\subsection{Lemma \ref{nodivide} Proof}\label{ndproof}
\begin{proof}

Let each action be strictly optimal in at least one state (or else the result is trivial). We may partition the set of types $\Theta = \Theta_{1} \sqcup \Theta_{2}$, where $\Theta_{1}$ is the set of types for whom the receiver strictly prefers to choose action $a_{1}$, and $\Theta_{2}$ is the set of types for whom the receiver strictly prefers to choose action $a_{2}$. It is without loss of generality to suppose that there are no types for whom the receiver is indifferent between her two actions.

Equivalently, $v_{i} \coloneqq u_{R}(a_{1},\theta_{i}) > u_{R}(a_{2},\theta_{i}) \eqqcolon w_{i}$ for all $\theta_{i} \in \Theta_{1}$, and $w_{i} > v_{i}$ for all $\theta_{i} \in \Theta_{2}$. Denote
\[\begin{split}
    \Theta_{1} &= \left\{\theta_{1},\dots,\theta_{t}\right\}\\
    \Theta_{2} &= \left\{\theta_{t+1},\dots,\theta_{n}\right\}\\
\end{split}\]
Consider an equilibrium in which at least one type, $\theta_{k}$, mixes. Without loss of generality, let $\theta_{k} \in \Theta_{1}$, i.e. he is a type for whom the receiver would strictly prefer to choose action $a_1$. Next, suppose that $\theta_{k}$ mixes over a subset of the set of messages, $M_{k}$, where

\[M_{k} = \left\{m_{1},\dots, m_{l}\right\}\] with a generic element $m_{k} \in M_{k}$. Moreover, $M_{k}$ is partitioned by three sets, $M_{k}^{0}$, $M_{k}^{1}$ and $M_{k}^{2}$, where $M_{k}^{0}$ is the set of messages after which the receiver is indifferent between her two actions, $M_{k}^{1}$ is the set of messages after which the receiver strictly prefers $a_{1}$, and $M_{k}^{2}$ is the set of messages after which the receiver strictly prefers $a_{2}$. By assumption, neither $M_{k}^{1}$ nor $M_{k}^{2}$ is the empty set, in which case we can pick two messages, $m_{1} \in M_{k}^{1}$ and $m_{2} \in M_{k}^{2}$. The receiver's expected payoff at this equilibrium can be written as 
\[V(\mu) = \mu\left(\theta_{k}\right)\left(\sigma_{k}\left(m_{1}\right)v_{k} + \sigma_{k}\left(m_{2}\right)w_{k}\right) + \gamma\]
where $\gamma$ is the remainder of the receiver's payoff that--crucially for the sake of this proof--does not depend on $\sigma_{k}\left(m_{1}\right)$ or $\sigma_{k}\left(m_{2}\right)$. However, the receiver's payoff strictly increases if instead $\theta_{k}$ modified his mixed strategy so that $\hat{\sigma}_{k}\left(m_{1}\right) = \sigma_{k}\left(m_{1}\right) + \sigma_{k}\left(m_{2}\right)$ and $\hat{\sigma}_{k}\left(m_{2}\right) = 0$ since $v_{k} > w_{k}$ (recall that we stipulated that $\theta_{k} \in \Theta_{1}$). Moreover, it is easy to see that this is also an equilibrium: $\theta_{k}$ is indifferent over any pure strategy in the support of his mixed strategy; and following messages $m_{1}$ and $m_{2}$ under the new mixture, the receiver still finds it optimal to choose $a_{1}$ and $a_{2}$, respectively (and the receiver's beliefs and payoffs following any other message are unchanged).

Since $m_{1} \in M_{k}^{1}$ and $m_{2} \in M_{k}^{2}$ were two arbitrary messages, and $\theta_{k}$ was an arbitrary type, the result follows.
\end{proof}

\subsection{Lemma \ref{too} Proof}\label{tooproof}
\begin{proof}
Again, let each action be uniquely optimal in at least one state, and let two messages be used in the receiver optimal equilibrium at belief $\mu_{0}$ (if only one message is used, the receiver obtains the pooling payoff at $\mu_{0}$, and hence any initial experiment must be to her profit).

In addition, we may, without loss of generality, impose that at $\mu_{0}$ there is an equilibrium such that, following each message, $m_{1}$ and $m_{2}$, different actions, $a_{1}$ and $a_{2}$, respectively, are strictly optimal. Otherwise, this would just yield the pooling payoff and the result would be trivial. By Lemma \ref{nodivide}, this imposition ensures that each type is choosing a pure strategy.

Next, partition set $\Theta_{1}$ (define in Appendix \ref{ndproof}) into two sets, $G_{1}^{e}$ and $G_{1}^{d}$, which correspond to the types who choose $m_{1}$ and $m_{2}$, respectively. Likewise, partition set $\Theta_{2}$ into two sets $G_{2}^{e}$ and $G_{2}^{d}$, which correspond to the types who choose $m_{2}$ and $m_{1}$, respectively.

These sets are, explicitly,

\[\begin{split}
    G_{1}^{e} &\coloneqq \left\{\theta_{1},\dots,\theta_{m}\right\}, \qquad G_{1}^{d} \coloneqq \left\{\theta_{m+1},\dots,\theta_{t}\right\}\\
    G_{2}^{e} &\coloneqq \left\{\theta_{t+1},\dots,\theta_{r}\right\}, \qquad G_{2}^{d} \coloneqq \left\{\theta_{r+1},\dots,\theta_{n}\right\}\\
\end{split}\]

Moreover, note that it is possible that some are the empty set; although of course if $G_{2}^{d}$ is nonempty then $G_{1}^{e}$ cannot be empty, and similarly for $G_{1}^{d}$ and $G_{2}^{e}$. 

Next, since we have imposed that an equilibrium of the above form exists at $\mu_{0}$, such an equilibrium must exist at any belief $\mu$ such that the following condition holds:

\begin{condition}\label{cond44}
\[\tag{$A1$}\label{100}\sum_{i=1}^{m}\mu\left(\theta_{i}\right)v_{i} + \sum_{j = r+1}^{n}\mu\left(\theta_{j}\right)v_{j} \geq \sum_{i=1}^{m}\mu\left(\theta_{i}\right)w_{i} + \sum_{j = r+1}^{n}\mu\left(\theta_{j}\right)w_{j}\]
and
\[\tag{$A2$}\label{101}\sum_{i=t+1}^{r}\mu\left(\theta_{i}\right)w_{i} + \sum_{j = m+1}^{t}\mu\left(\theta_{j}\right)w_{j} \geq \sum_{i=t+1}^{r}\mu\left(\theta_{i}\right)v_{i} + \sum_{j = m+1}^{t}\mu\left(\theta_{j}\right)v_{j}\]
\end{condition}
Accordingly, at $\mu_{0}$, the receiver's payoff is
\[\label{103}\tag{$A3$}\begin{split}
    V^{T}(\mu_{0}) = \sum_{i=1}^{m}\mu_{0}\left(\theta_{i}\right)v_{i} &+ \sum_{j = r+1}^{n}\mu_{0}\left(\theta_{j}\right)v_{j} +
\sum_{i=t+1}^{r}\mu_{0}\left(\theta_{i}\right)w_{i} + \sum_{j = m+1}^{t}\mu_{0}\left(\theta_{j}\right)w_{j}
\end{split}\]

Without loss of generality, we may suppose that there are just three signal realizations: one, $A$, after which Condition \ref{cond44} holds; one, $B$, after which there is a pooling equilibrium for which $a_{1}$ is optimal; and one, $C$, after which there is a pooling equilibrium for which $a_{2}$ is optimal. We may make this assumption since the receiver is only aided by multiple signal realizations in each pooling region. Then the receiver's expected payoff from the initial signal is bounded below by

\[\tag{$A4$}\label{104}\begin{split}
     p&\left[\sum_{i=1}^{m}\mu_{A}\left(\theta_{i}\right)v_{i} + \sum_{j = r+1}^{n}\mu_{A}\left(\theta_{j}\right)v_{j} + \sum_{i=t+1}^{r}\mu_{A}\left(\theta_{i}\right)w_{i} + \sum_{j = m+1}^{t}\mu_{A}\left(\theta_{j}\right)w_{j}\right]\\
     + &q\left[\sum_{i=1}^{m}\mu_{B}\left(\theta_{i}\right)v_{i} + \sum_{j = r+1}^{n}\mu_{B}\left(\theta_{j}\right)v_{j} + \sum_{i=t+1}^{r}\mu_{B}\left(\theta_{i}\right)v_{i} + \sum_{j = m+1}^{t}\mu_{B}\left(\theta_{j}\right)v_{j}\right]\\
     &+ s\left[\sum_{i=1}^{m}\mu_{C}\left(\theta_{i}\right)w_{i} + \sum_{j = r+1}^{n}\mu_{C}\left(\theta_{j}\right)w_{j} + \sum_{i=t+1}^{r}\mu_{C}\left(\theta_{i}\right)w_{i} + \sum_{j = m+1}^{t}\mu_{C}\left(\theta_{j}\right)w_{j}\right]
\end{split}\]
where
\[p \coloneqq \Pr(A), \quad q \coloneqq \Pr(B), \quad s \coloneqq \Pr(C), \quad p + q + s = 1,\]
and $p\mu_{A}(\theta_{i}) + q\mu_{B}(\theta_{i}) + r\mu_{C}(\theta_{i}) = \mu_{0}(\theta_{i})$ for all $i$. Expression \ref{104} can be simplified to
\[\begin{split}
    V^{T}(\mu_{0}) + q\Upsilon + s\Gamma
\end{split}\]
where
\[\tag{$A5$}\label{105}\begin{split}
    \Upsilon \coloneqq \left[\sum_{i=t+1}^{r}\mu_{B}\left(\theta_{i}\right)\left(v_{i}-w_{i}\right) + \sum_{j = m+1}^{t}\mu_{B}\left(\theta_{j}\right)\left(v_{j} - w_{j}\right)\right]
\end{split}\]
and
\[\tag{$A6$}\label{106}\begin{split}
    \Gamma \coloneqq \left[\sum_{i=1}^{m}\mu_{C}\left(\theta_{i}\right)\left(w_{i}-v_{i}\right) + \sum_{j = r+1}^{n}\mu_{C}\left(\theta_{j}\right)\left(w_{j}-v_{j}\right)\right]
\end{split}\]
Since $a_{1}$ is optimal in the pooling equilibrium following message $B$ and $a_{2}$ is optimal in the pooling equilibrium following $C$ we must have
\[\tag{$A7$}\label{107}\begin{split}
    \sum_{i=1}^{m}\mu_{B}\left(\theta_{i}\right)\left(v_{i}-w_{i}\right) + \sum_{j = r+1}^{n}\mu_{B}\left(\theta_{j}\right)\left(v_{j}-w_{j}\right) \geq \sum_{i=t+1}^{r}\mu_{B}\left(\theta_{i}\right)\left(w_{i}-v_{i}\right) + \sum_{j = m+1}^{t}\mu_{B}\left(\theta_{j}\right)\left(w_{j}-v_{j}\right)
\end{split}\]
and

\[\tag{$A8$}\label{108}\begin{split}
    \sum_{i=1}^{m}\mu_{C}\left(\theta_{i}\right)\left(v_{i}-w_{i}\right) + \sum_{j = r+1}^{n}\mu_{C}\left(\theta_{j}\right)\left(v_{j}-w_{j}\right) \leq \sum_{i=t+1}^{r}\mu_{C}\left(\theta_{i}\right)\left(w_{i}-v_{i}\right) + \sum_{j = m+1}^{t}\mu_{C}\left(\theta_{j}\right)\left(w_{j}-v_{j}\right)
\end{split}\]
Moreover, since Condition \ref{cond44} does not hold for belief $\mu_{B}$, we must have either
\[\tag{$A9$}\label{109}\sum_{i=1}^{m}\mu_{B}\left(\theta_{i}\right)v_{i} + \sum_{j = r+1}^{n}\mu_{B}\left(\theta_{j}\right)v_{j} < \sum_{i=1}^{m}\mu_{B}\left(\theta_{i}\right)w_{i} + \sum_{j = r+1}^{n}\mu_{B}\left(\theta_{j}\right)w_{j}\]
and/or
\[\tag{$A10$}\label{110}\sum_{i=t+1}^{r}\mu_{B}\left(\theta_{i}\right)w_{i} + \sum_{j = m+1}^{t}\mu_{B}\left(\theta_{j}\right)w_{j} < \sum_{i=t+1}^{r}\mu_{B}\left(\theta_{i}\right)v_{i} + \sum_{j = m+1}^{t}\mu_{B}\left(\theta_{j}\right)v_{j}\]
Suppose that Inequality \ref{109} holds. then we may substitute it into Inequality \ref{107} and cancel:
\[\begin{split}
    \sum_{i=t+1}^{r}\mu_{B}\left(\theta_{i}\right)\left(v_{i}-w_{i}\right) + \sum_{j = m+1}^{t}\mu_{B}\left(\theta_{j}\right)\left(v_{j}-w_{j}\right) \geq 0
\end{split}\]
Hence, $\Upsilon$ is positive. On the other hand, if Inequality \ref{110} holds then we may substitute it directly into $\Upsilon$, which again must be positive.

A symmetric procedure works at belief $\mu_{C}$ to establish that $\Gamma$ also must be positive. Since $\Gamma$ and $\Upsilon$ are both positive, Expression \ref{104}, the receiver's payoff from learning must be at least weakly greater than $V^{T}(\mu_{0})$. We have exhausted every case, and so conclude that any initial experiment benefits the receiver.
\end{proof}

\subsection{Lemma \ref{409} Proof}\label{409proof}

\begin{proof}

Denote the set of (three) states by $\Theta = \left\{\theta_{L}, \theta_{M}, \theta_{H}\right\}$. For convenience we continue to use the shorthand $v_{i} \coloneqq u_{R}\left(a_{1},\theta_{i}\right)$ and $w_{i} \coloneqq u_{R}\left(a_{2},\theta_{i}\right)$, for all $i = L,M,H$. Without loss of generality, we may assume that action $a_{1}$ is strictly optimal in states $\theta_{L}$ and $\theta_{H}$, and action $a_{2}$ is strictly optimal in state $\theta_{M}$: $v_{L} > w_{L}$, $v_{H} > w_{H}$, and $w_{M} > v_{M}$.

Next, observe that we can picture any belief in the $(x,y)$-coordinate plane, where the $x$-axis corresponds to $\mu_{M}$, and the $y$-axis corresponds to $\mu_{H}$. Define region $R_{1}$ as the region in which $a_{1}$ is optimal and $R_{2}$ as the region in which $a_{2}$ is optimal. Each region, $R_{1}$ and $R_{2}$, is compact and convex, and the two regions share a boundary that is a line segment. Define $R$ to be the simplex of beliefs, $R = R_{1} \cup R_{2}$.

From Lemma \ref{nodivide}, for some prior $\mu_{0}$, there are just three possible arrangements of the posteriors that are induced by the null-optimal experiment: 

\vspace{.2cm}

\textbf{Case 1:} All posteriors lie in one region.

\vspace{.2cm}

\textbf{Case 2:} All posteriors that follow messages chosen by $\theta_{M}$ fall in region $R_{2}$, where there is at least one posterior that does not lie on the boundary $R_{1} \cap R_{2}$; and all posteriors that follow messages chosen by $\theta_{L}$ and $\theta_{H}$ fall in region $R_{1}$, where there is at least one posterior that does not lie on the boundary $R_{1} \cap R_{2}$.

\vspace{.2cm}

\textbf{Case 3:} All posteriors that follow messages chosen by $\theta_{H}$ ($\theta_{L}$) fall in region $R_{1}$, where there is at least one posterior that does not lie on the boundary $R_{1} \cap R_{2}$; and all posteriors that follow messages chosen by $\theta_{L}$ ($\theta_{H}$) and $\theta_{M}$ fall in region $R_{2}$, where there is at least one posterior that does not lie on the boundary $R_{1} \cap R_{2}$. By symmetry, we need focus only on the case where the posteriors that follow $\theta_{H}$'s messages fall in region $R_{1}$.

Note that throughout this proof, by Lemma \ref{nodivide}, each belief that is not on the line segment $R_{1} \cap R_{2}$ must lie on the boundary of the triangle ($2$-simplex) of beliefs.

In the first case, the receiver clearly benefits from any initial experiment. The payoff under the prior is the pooling payoff, and so \textit{ex ante} learning can only aid the receiver. The second case is trickier: there are two sub-cases that we need to examine: 

\vspace{.2cm}

\textbf{Case 2a:} The mixed strategy of each type has support on at least one message that induces a belief that is \textit{not} on the boundary $R_{1} \cap R_{2}$.

\vspace{.2cm}

\textbf{Case 2b:} There is one type, say $\theta_{L}$, that mixes only over messages that induce beliefs that are on the boundary $R_{1} \cap R_{2}$.

In case 2a, it is easy to see that there must also be a receiver optimal equilibrium in which $\theta_{H}$ and $\theta_{L}$ each choose one message (possibly the same message) that induces a belief in $R \setminus R_{2}$, and $\theta_{M}$ chooses one message that induces a belief in $R \setminus R_{1}$. This is clearly an equilibrium, since each type already has support of its mixed strategy on its respective message; is optimal for the receiver, since this yields the receiver the maximum possible payoff (the separating payoff); and, moreover, does not depend on the prior. Hence, any initial experiment benefits the receiver. 

In case 2b, there must exist a receiver-optimal equilibrium in which $\theta_{H}$ sends just one message, $m_{H}$, which induces a belief in $R \setminus R_{2}$; $\theta_{L}$ sends just one message, $m_{L}$, which induces a belief on the boundary $R_{1} \cap R_{2}$; and $\theta_{M}$ mixes between $m_{L}$ and $m_{M}$, the latter which induces a belief in $R \setminus R_{1}$. We will return to this distribution of posteriors shortly.

Case 3 also must be divided into two cases: 

\vspace{.2cm}

\textbf{Case 3a:} $\theta_{L}$ mixes only over messages that induce beliefs that are on the boundary $R_{1} \cap R_{2}$.

\vspace{.2cm} 

\textbf{Case 3b:} $\theta_{L}$ mixes over at least one message that induces a belief in $R \setminus R_{1}$.

Case 3a is identical to case 2b. In case 3b, there must exist a receiver-optimal equilibrium in which $\theta_{H}$ sends just one message, $m_{H}$, that induces a belief in $R \setminus R_{2}$; and $\theta_{L}$ and $\theta_{M}$ pool on one message $m_{p}$, that induces a belief in $R \setminus R_{1}$. Here, only two messages are used and so by Lemma \ref{too} any initial experiment benefits the receiver.

Consequently, it remains to consider the scenario that case 2b reduces to: for prior $\mu_{0}$ just three messages are used as follows: $\theta_{H}$ separates and chooses message $m_{H}$, $\theta_{L}$ chooses message $m_{L}$ and $\theta_{M}$ mixes between two messages, $m_{L}$ and $m_{M}$, in such a way that the receiver is indifferent over her actions following $m_{L}$ (note, that there is an equivalent scenario that is obtained by interchanging $\theta_{H}$ and $\theta_{L}$).

The prior $\mu_{0}$ must be such that \[\mu_{L}^{0} \leq \frac{w_{M}-v_{M}}{v_{L}-w_{L}}\mu_{M}^{0}\]
and the receiver's payoff is
\[v \coloneqq V(\mu_{0}) = \mu_{H}^{0}v_{H} + \mu_{M}^{0}w_{M} + \mu_{L}^{0}w_{L}\]
Call this equilibrium \hypertarget{b}{\textcolor{Fuchsia}{$S^{\dagger}$}}.
It is clear that without loss of generality we may focus on an initial experiment that is binary, and which yields just two beliefs $\mu_{1}$ and $\mu_{2}$, where $\mu_{1}$ is a belief such \hyperlink{b}{$S^{\dagger}$} is feasible, and $\mu_{2}$ is a belief such that $S^{\dagger}$ is infeasible. To see that this is without loss of generality, note that if there are multiple initial experiment realizations after which $S^{\dagger}$ is feasible, the receiver achieves at least the payoff as in the case when there is just one such initial experiment realization. Likewise, if there are multiple initial experiment realizations after which $S^{\dagger}$ is infeasible, since we need only assume the pooling payoff in this case, it is again clear that the receiver achieves at least the payoff as in the case where there is just one such initial experiment realization.

Thus, the initial experiment, $\zeta$, yields $\mu_{1} = \left(\mu_{L}^{1}, \mu_{M}^{1}, \mu_{H}^{1}\right)$ with probability $p$ and $\mu_{2} = \left(\mu_{L}^{2}, \mu_{M}^{2}, \mu_{H}^{2}\right)$ with probability $(1-p)$, where $p \mu_{1} + (1-p) \mu_{2} = \mu_{0}$. For belief $\mu_{2}$, the receiver's payoff is bounded below by the pooling equilibrium payoff, and so we assume that that is indeed the payoff. Note that since $\mu_{2}$ is not a belief for which $S^{\dagger}$ is feasible, we must have \[\tag{$A11$}\label{18}\mu_{L}^{2} > \frac{w_{M}-v_{M}}{v_{L}-w_{L}}\mu_{M}^{2}\]
\begin{claim}
For belief $\mu_{2}$, action $a_{1}$ is optimal.
\end{claim}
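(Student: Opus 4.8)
The plan is to read the claim off directly from the infeasibility inequality \eqref{18} together with the standing sign assumptions $v_L > w_L$, $v_H > w_H$, and $w_M > v_M$. Recall first that, since the receiver has only two actions, $a_1$ is (weakly) optimal at a belief $\mu = (\mu_L,\mu_M,\mu_H)$ exactly when $\mathbb{E}_{\mu}[u_R(a_1,\cdot)] \geq \mathbb{E}_{\mu}[u_R(a_2,\cdot)]$, i.e.\ when
\[\mu_L(v_L - w_L) + \mu_H(v_H - w_H) \;\geq\; \mu_M(w_M - v_M),\]
which is precisely the defining inequality of the region $R_1$ (rewritten using $v_M < w_M$). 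So it suffices to verify this inequality at $\mu_2$.

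First I would multiply \eqref{18}, namely $\mu_L^{2} > \tfrac{w_M - v_M}{v_L - w_L}\,\mu_M^{2}$, through by the strictly positive quantity $v_L - w_L$, obtaining $\mu_L^{2}(v_L - w_L) > \mu_M^{2}(w_M - v_M)$. Next, since $v_H > w_H$ and $\mu_H^{2} \geq 0$, the term $\mu_H^{2}(v_H - w_H)$ is nonnegative, so adding it to the left-hand side only strengthens the inequality:
\[\mu_L^{2}(v_L - w_L) + \mu_H^{2}(v_H - w_H) \;>\; \mu_M^{2}(w_M - v_M).\]
This is the strict form of the displayed optimality condition, so $\mu_2 \in R_1$; in fact $a_1$ is \emph{strictly} optimal at $\mu_2$, which gives the claim (and, in particular, means $\mu_2$ lies in the interior of $R_1$, away from the boundary $R_1 \cap R_2$).

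There is essentially no obstacle here: the only thing that could conceivably spoil the argument is the $\theta_H$-coordinate dragging the belief out of $R_1$, and that cannot happen because $a_1$ is assumed strictly optimal in state $\theta_H$, so $\theta_H$-mass can only pull a belief further into $R_1$. This is also the reason the argument uses only the fact that $S^{\dagger}$ is infeasible at $\mu_2$ (equivalently \eqref{18}) and nothing about the particular initial experiment $\zeta$ or about $p$.
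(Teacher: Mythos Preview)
Your proof is correct and uses exactly the same ingredients as the paper's: inequality \eqref{18} together with $v_H > w_H$. The only cosmetic difference is that the paper argues by contradiction (assuming $a_2$ strictly optimal and deriving $\mu_H^2 w_H > \mu_H^2 v_H$), whereas you proceed directly; the underlying algebra is identical.
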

\begin{proof}
Suppose for the sake of contradiction that $a_{1}$ is not optimal. That is
\[\mu_{L}^{2}w_{L} + \mu_{M}^{2}w_{M} + \mu_{H}^{2}w_{H} > \mu_{L}^{2}v_{L} + \mu_{M}^{2}v_{M} + \mu_{H}^{2}v_{H}\]
But then 
\[\begin{split}
    \mu_{L}^{2}w_{L} + \mu_{M}^{2}w_{M} + \mu_{H}^{2}w_{H} &> \mu_{L}^{2}w_{L} + \mu_{M}^{2}w_{M} + \mu_{H}^{2}v_{H}\\
    \mu_{H}^{2}w_{H} &> \mu_{H}^{2}v_{H}\\
\end{split}\]
where the second inequality follows from Inequality \ref{18}. This is a contradiction.
\end{proof}
Thus, $a_{1}$ is optimal and so the receiver's expected payoff is
\[V = p\left[\mu_{H}^{1}v_{H} + \mu_{M}^{1}w_{M} + \mu_{L}^{1}w_{L}\right] + (1-p)\left[\mu_{L}^{2}v_{L} + \mu_{M}^{2}v_{M} + \mu_{H}^{2}v_{H}\right]\]
which reduces to
\[V = v + (1-p)\left[\mu_{L}^{2}\left(v_{L}-w_{L}\right) - \mu_{M}^{2}\left(w_{M}-v_{M}\right) \right]\]
which is greater than $v$ by Inequality \ref{18}. Case 2b is illustrated in Figure \ref{3ow}.

\end{proof}

\subsection{Lemma \ref{410} Proof and Payoff Function Derivation}\label{410proof}
\begin{proof} We derive the receiver's payoff as a function of the belief $\mu$ through a pair of claims. First,
\begin{claim}
For any belief $\mu > 13/36$, there exists no equilibrium in which a message is played that induces a belief such that action $a_{1}$ is strictly optimal.
\end{claim}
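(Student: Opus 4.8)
The plan is a proof by contradiction. Suppose that in some equilibrium an on-path message $m^{*}$ induces a belief $\nu=(\nu_{1},\nu_{2},\nu_{3},\nu_{4})$ at which $a_{1}$ is strictly optimal; since $u_{R}(a_{1},\cdot)=(0,0,1,2)$ and $u_{R}(a_{2},\cdot)=(1,1,0,0)$, this says $\nu_{3}+2\nu_{4}>\nu_{1}+\nu_{2}$, so the receiver responds to $m^{*}$ with the pure action $a_{1}$. Throughout I would use the three structural observations already recorded: $\theta_{1}$ sends $m_{1}$ and $\theta_{2}$ sends $m_{2}$ with probability one (strictly dominant messages), $\theta_{4}$ never sends $m_{3}$ (strictly dominated), and consequently $m_{2}$ is always on path, so any type deviating to $m_{2}$ is guaranteed at least $\min\{1,4\}=1$ if it is $\theta_{3}$ and at least $\min\{2,0\}=0$ if it is $\theta_{4}$. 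Since $\nu_{3}+2\nu_{4}>0$, either $\theta_{3}$ or $\theta_{4}$ sends $m^{*}$ with positive probability, and I would split into the three cases $m^{*}\in\{m_{1},m_{2},m_{3}\}$.

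The case $m^{*}=m_{1}$ needs nothing about $\mu$: from $(m_{1},a_{1})$ type $\theta_{3}$ gets $0$ and type $\theta_{4}$ gets $-1$, each strictly below what a deviation to the on-path message $m_{2}$ guarantees, so neither sends $m_{1}$; hence $m_{1}$ carries only $\theta_{1}$, its induced belief is the vertex $\theta_{1}$, and there $a_{2}$ is strictly optimal — contradiction. For $m^{*}=m_{2}$ the optimality inequality reads (in prior-weighted masses, writing $\sigma_{i}\equiv\sigma_{\theta_{i}}$) $\mu_{3}\sigma_{3}(m_{2})+2\mu_{4}\sigma_{4}(m_{2})>\mu$; since $\mu_{3}=\tfrac18<\mu$ this forces $\sigma_{4}(m_{2})>0$, so $\theta_{4}$ attains its top payoff $2$ and therefore sends $m_{2}$ only; feeding $\sigma_{4}(m_{2})=1$ and $\mu_{4}=\tfrac{13}{24}-\mu$ back in, and using $\mu>\tfrac{13}{36}$, forces $\sigma_{3}(m_{2})>0$ too, so $\theta_{3}$'s equilibrium payoff is $1$. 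But then $\theta_{3}$ would profit by deviating to $m_{1}$ — which now carries only $\theta_{1}$, induces $a_{2}$, and yields $\theta_{3}$ payoff $2$ — unless $\theta_{3}$ already randomizes onto $m_{1}$; and in that case the receiver's indifference after $m_{1}$ (carrying only $\theta_{1},\theta_{3}$) pins $\theta_{3}$'s mass on $m_{1}$ equal to $\theta_{1}$'s, namely $\tfrac13$, exceeding $\mu_{3}=\tfrac18$ — contradiction.

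The substantive case is $m^{*}=m_{3}$. Only $\theta_{3}$ can send $m_{3}$, so $m_{3}$ induces the vertex $\theta_{3}$, the receiver plays $a_{1}$, and $\theta_{3}$'s equilibrium payoff is $3$. For $\theta_{3}$ not to gain by switching to $m_{2}$ one needs $\rho(a_{1}\mid m_{2})\ge\tfrac13>0$, so $a_{1}$ is a best reply after $m_{2}$; but if $\theta_{3}$ itself placed no weight on $m_{2}$, then $m_{2}$ would carry only $\theta_{2}$ and $\theta_{4}$, and the bound $\mu_{4}=\tfrac{13}{24}-\mu$ together with $\mu>\tfrac{13}{36}$ makes $a_{2}$ strictly optimal after $m_{2}$ — contradiction. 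Hence $\theta_{3}$ randomizes between $m_{2}$ and $m_{3}$, so $\rho(a_{1}\mid m_{2})=\tfrac13$ exactly, the receiver is indifferent after $m_{2}$, and $\mu_{3}\sigma_{3}(m_{2})+2\mu_{4}\sigma_{4}(m_{2})=\mu$. A parallel chase for $\theta_{4}$ — its payoff is then $\tfrac23$, which it could beat with $\tfrac54$ by deviating to $m_{1}$ unless it randomizes onto $m_{1}$, forcing an interior response after $m_{1}$ and hence receiver-indifference there — pins $\theta_{4}$'s mass on $m_{1}$ at $\tfrac16$; combining this with the two Bayes-consistency relations forces $\theta_{3}$ to place mass $3\mu-\tfrac34$ on $m_{2}$, which for $\mu>\tfrac{13}{36}$ exceeds $\tfrac13$ and a fortiori $\mu_{3}=\tfrac18$ — the final contradiction.

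The main obstacle is that the crude Bayes-plausibility bounds alone (using only $\mu_{3}=\tfrac18$ and $\mu_{4}=\tfrac{13}{24}-\mu$) are not sharp enough: by themselves they would yield the claim only for $\mu\ge\tfrac{29}{72}$, not for all $\mu>\tfrac{13}{36}$. The extra leverage — and the delicate part — is wringing out the senders' incentive constraints to show that in each case the pertinent type must randomize across two messages, which in turn forces the receiver into exact indifference mixtures whose Bayesian consistency pins the masses that $\theta_{3}$ and $\theta_{4}$ place on $m_{1}$ at the rigid values $\tfrac13$ and $\tfrac16$, both incompatible with $\mu_{3}=\tfrac18$. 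Careful bookkeeping of which messages are on path, so that the receiver's response is a genuine best reply and each deviation payoff is well-defined, is where the argument must be handled with care.
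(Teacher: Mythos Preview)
Your proof is correct and follows the same three-case decomposition (by the identity of $m^{*}$) as the paper. Your treatment of $m^{*}=m_{3}$ is in fact more careful than the paper's: the paper stops at ``$\theta_{4}$ would deviate to $m_{1}$ for payoff $5/4$'' without addressing the possibility that $\theta_{4}$ already randomizes onto $m_{1}$ (forcing receiver indifference there), whereas you close this gap by pinning the indifference masses $\mu_{4}\sigma_{4}(m_{1})=\tfrac16$ and $\mu_{3}\sigma_{3}(m_{2})=3\mu-\tfrac34$ and deriving the contradiction with $\mu_{3}=\tfrac18$.
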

\begin{proof}

We can exhaustively proceed through each message:

\begin{enumerate}
    \item Suppose $a_{1}$ is strictly optimal following $m_{1}$. Then, $\theta_{4}$ must have support of his mixed strategy on $m_{1}$. That gives him a payoff of $-1$, so he can deviate profitably to $m_{2}$. 
    \item Suppose $a_{1}$ is strictly optimal following $m_{2}$. Both $\theta_{3}$ and $\theta_{4}$ must have support of their mixed strategies on $m_{2}$.  Moreover, so much of their support must be on $m_{2}$ that the receiver must choose $a_{2}$ following $m_{1}$ (which will always be chosen by $\theta_{1}$). Hence, $\theta_{3}$ can deviate profitably to $m_{1}$.
    \item Suppose $a_{1}$ is strictly optimal following $m_{3}$. Consequently $\theta_{3}$ must have some support of his mixed strategy on $m_{3}$ (since $\theta_{4}$ will never choose $m_{3}$). Moreover, $\theta_{3}$ cannot be choosing a pure strategy since otherwise he would have a profitable deviation to $m_{2}$. Hence, $\theta_{3}$ must be mixing over $m_{3}$ and $m_{2}$ (since $3$ is strictly larger than either of $\theta_{3}$'s payoffs for $m_{1}$). The receiver must also mix following $m_{2}$, so as to leave $\theta_{3}$ willing to mix. In particular, the receiver must choose $a_{2}$ with probability $2/3$ following $m_{2}$ and $a_{1}$ with probability $1/3$. But note that $\theta_{4}$ must also have support on $m_{2}$, which message would thus yield it a payoff of $2/3$, which is less than $5/4$, the payoff he would get from deviating profitably to $m_{1}$.
\end{enumerate}
\end{proof}
As a result $V^{T} = \frac{1}{3} + \mu$ for all $\mu > 13/36$. Second,
\begin{claim}
For any belief $\mu \leq 13/36$, the receiver optimal equilibrium begets a payoff of $37/24 - 2\mu$.
\end{claim}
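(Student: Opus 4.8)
The plan is to establish the two bounds $V^{T}(\mu)\ge \tfrac{37}{24}-2\mu$ and $V^{T}(\mu)\le \tfrac{37}{24}-2\mu$ for every $\mu\in[0,13/36]$.

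For the lower bound I would exhibit an explicit equilibrium. Since $m_{1}$ and $m_{2}$ are strictly dominant for $\theta_{1}$ and $\theta_{2}$, let those types send them; let $\theta_{3}$ send $m_{3}$ and $\theta_{4}$ send $m_{2}$; and let the receiver play $a_{2}$ after $m_{1}$, $a_{1}$ after $m_{3}$, and $a_{1}$ after $m_{2}$. The posterior after $m_{1}$ is $\delta_{\theta_{1}}$ (so $a_{2}$ is optimal) and after $m_{3}$ is $\delta_{\theta_{3}}$ (so $a_{1}$ is optimal); after $m_{2}$ the posterior puts relative weight $\mu$ on $\theta_{2}$ and $13/24-\mu$ on $\theta_{4}$, and $a_{1}$ is optimal there precisely because $2(13/24-\mu)\ge\mu\iff\mu\le 13/36$. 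Sender optimality is immediate for $\theta_{1},\theta_{2}$; type $\theta_{3}$ gets $3$ at $m_{3}$ and at most $2$ from any deviation; type $\theta_{4}$ gets $2$ at $m_{2}$, and at most $5/4$ from deviating to $m_{1}$ and a negative payoff from $m_{3}$. The receiver's payoff is $\mu_{1}\cdot 1+\mu_{2}\cdot 0+\mu_{3}\cdot 1+\mu_{4}\cdot 2=\tfrac{37}{24}-2\mu$.

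For the upper bound I would use that, in any PBE, the receiver best-responds after every on-path message, so her payoff equals $\sum_{m}\Pr(m)\,\phi(\mu_{m})$, where $\phi(\nu)=\max\{\nu_{3}+2\nu_{4},\ \nu_{1}+\nu_{2}\}$ is the value of her decision problem and $\{(\mu_{m},\Pr(m))\}$ is the Bayes-plausible distribution of posteriors induced by the sender's strategy. Because $m_{1}$ ($m_{2}$) is strictly dominant for $\theta_{1}$ ($\theta_{2}$) and $m_{3}$ is strictly dominated for $\theta_{4}$, the only residual freedom is how $\theta_{3}$ and $\theta_{4}$ spread their mass; writing $x_{i}$ and $y_{i}$ for the unnormalized weights that $\theta_{3}$ and $\theta_{4}$ place on $m_{i}$ (so $x_{i},y_{i}\ge 0$, $x_{1}+x_{2}\le\mu_{3}$, $y_{1}+y_{2}=\mu_{4}$, and $\theta_{4}$ places no weight on $m_{3}$), the identity becomes
\[ V=\max\{x_{1}+2y_{1},\ \mu_{1}\}+\max\{x_{2}+2y_{2},\ \mu_{2}\}+(\mu_{3}-x_{1}-x_{2}). \]
Now split on whether $a_{1}$ is a best reply after $m_{2}$. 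If it is, the middle term equals $x_{2}+2y_{2}$, and after cancelling $x_{2}$ one gets $V=\max\{2y_{1},\mu_{1}-x_{1}\}+2y_{2}+\mu_{3}\le\mu_{1}+2(y_{1}+y_{2})+\mu_{3}=\mu_{1}+2\mu_{4}+\mu_{3}=\tfrac{37}{24}-2\mu$. If $a_{1}$ is not a best reply after $m_{2}$ — so the receiver plays $a_{2}$ there for sure and $x_{2}+2y_{2}<\mu_{2}$ — then $\theta_{4}$ must put positive weight on $m_{1}$, since otherwise $y_{2}=\mu_{4}$ and $x_{2}+2y_{2}\ge 2\mu_{4}\ge\mu_{2}$ (using $\mu\le 13/36$), a contradiction. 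Then $\theta_{4}$'s incentive constraint — comparing $m_{1}$, where it gets $\tfrac54-\tfrac94\alpha_{m_{1}}$ (with $\alpha_{m_{1}}$ the probability of $a_{1}$ after $m_{1}$), to the deviation to $m_{2}$, where it would get $2\alpha_{m_{2}}=0$ — forces $\alpha_{m_{1}}\le 5/9<1$, so $a_{2}$ is a best reply after $m_{1}$, i.e.\ $x_{1}+2y_{1}\le\mu_{1}$. Both maxima are then pinned to their second arguments, giving $V=\mu_{1}+\mu_{2}+\mu_{3}-x_{1}-x_{2}\le\mu_{1}+\mu_{2}+\mu_{3}\le\mu_{1}+2\mu_{4}+\mu_{3}=\tfrac{37}{24}-2\mu$, where the last step uses $\mu_{2}\le 2\mu_{4}\iff\mu\le 13/36$.

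The main obstacle is the upper bound. One has to first isolate the structural restrictions — $\theta_{1},\theta_{2}$ have strictly dominant messages and $\theta_{4}$ a strictly dominated one — reduce the receiver's payoff to the four parameters $(x_{1},x_{2},y_{1},y_{2})$ through the identity above, and then recognize that the decisive leverage is $\theta_{4}$'s incentive not to switch to $m_{2}$: it is exactly this constraint, together with $\mu\le 13/36$, that forbids the receiver from simultaneously choosing $a_{2}$ after $m_{2}$ and $a_{1}$ after $m_{1}$, which is the only configuration that would otherwise lift the payoff above $\tfrac{37}{24}-2\mu$ (and is precisely what becomes feasible once $\mu>13/36$).
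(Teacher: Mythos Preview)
Your argument is correct. The lower bound is exactly the equilibrium the paper exhibits, and your verification of it is the same. For the upper bound, the paper is considerably more informal: it argues heuristically that, since $\theta_{1},\theta_{2}$ are locked to $m_{1},m_{2}$ and $\theta_{4}$ never sends $m_{3}$, the receiver must ``get it wrong'' on at least one of $\theta_{1},\theta_{2},\theta_{4}$, and then invokes $\theta_{4}$'s deviation to $m_{2}$ to rule out the configuration in which she gets $\theta_{1}$ wrong; it also offers, as an alternative, an appeal to an external result (Whitmeyer 2019) that this is the optimum under \emph{any} information structure. Your route is different in execution: you reduce the receiver's payoff to the closed form $\max\{x_{1}+2y_{1},\mu_{1}\}+\max\{x_{2}+2y_{2},\mu_{2}\}+(\mu_{3}-x_{1}-x_{2})$ and then split on whether $a_{1}$ is a best reply after $m_{2}$, using $\theta_{4}$'s IC in the second case to force $a_{2}$ to be a best reply after $m_{1}$ as well. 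The decisive idea---$\theta_{4}$'s incentive constraint is what blocks the only configuration that could beat $37/24-2\mu$---is the same as the paper's, but your version is self-contained and sharper than the paper's sketch, and it makes transparent exactly where the threshold $\mu\le 13/36$ (equivalently $\mu_{2}\le 2\mu_{4}$) enters.
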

\begin{proof}
First, an equilibrium that begets such a payoff exists. Type $\theta_{1}$ chooses $m_{1}$, types $\theta_{2}$ and $\theta_{4}$ choose $m_{2}$, and type $\theta_{3}$ chooses $m_{3}$. Upon observing $m_{1}$, the receiver chooses $a_{2}$, and upon observing either $m_{2}$ or $m_{3}$ the receiver chooses $a_{1}$.

It is immediately evident that neither $\theta_{1}$ nor $\theta_{2}$ have profitable deviations, since they are choosing strictly dominant strategies. On path, $\theta_{4}$ obtains $2$, whereas his payoff from deviating would be less than $2$. Finally, $\theta_{3}$ obtains $3$ following $m_{3}$ and less than $3$ following any other message. The receiver's equilibrium payoff is
\[\frac{1}{3} + \frac{1}{8} + 2\left(\frac{13}{24}-\mu\right) = \frac{37}{24} - 2\mu\]

Second, it is easy to verify that this equilibrium is optimal for the receiver: $\theta_{4}$ is unwilling to choose message $m_{3}$ and $\theta_{1}$ and $\theta_{2}$ always choose $m_{1}$ and $m_{2}$, respectively. Thus, the receiver will always get her decision ``wrong" with respect to some mixture of $\theta_{1}$, $\theta_{2}$ or $\theta_{4}$. For $\mu \leq 1/3$, she prefers to get her decision wrong with respect to $\theta_{2}$, so our proposed equilibrium is obviously optimal. For $\mu \in (1/3, 13/36]$, she prefers to get her decision wrong with respect to $\theta_{1}$ but then $\theta_{4}$ would always have a profitable deviation to $m_{2}$. Thus, she gets her decision wrong with respect to $\theta_{2}$.

Alternatively, as noted in Whitmeyer (2019) \cite{Whit}, this is the receiver-optimal payoff under any information structure (or any degree of transparency). Therefore, since the receiver can obtain this payoff with full transparency, the corresponding equilibrium must be optimal.
\end{proof}
Finally, Figure \ref{4dconvex} illustrates that $V^{T}$ is not convex.
\end{proof}

\subsection{Lemma \ref{3by3} Proof}\label{3by3proof}

\begin{proof}

It is easy to see that in this game there are no separating equilibrium--type $\theta_{L}$ can deviate profitably by mimicking either of the other types. For each of the three possible priors ($\mu_{0}$, $\mu_{1}$, and $\mu_{2}$) any pooling equilibrium results in the receiver choosing $l$. There exist pooling equilibria for each prior, and any off-path belief sustains such equilibria, since the sender types receive their maximal payoff on path.

Compiling the remaining equilibria appears daunting (or at least unpleasant), but we can thankfully bypass this, since we are interested in the receiver optimal equilibria. Instead of finding each equilibrium, we take a belief-based approach and construct and solve the appropriate maximization problem for the receiver.

Observe that the receiver has at most three posterior distributions at equilibrium, which corresponds to the case in which all three messages are used on path, which messages beget different posterior distributions. Note that any distribution or belief in the context of this example corresponds to a point in Figure \ref{convex1}. It is a standard result that beliefs are a \textit{martingale}--hence, an equilibrium pair of posterior beliefs is feasible only if there exists a line segment between the two beliefs that intersects the prior (for it to be an \textit{equilibrium} pair of beliefs, each posterior must be generated by an equilibrium vector of strategies). Likewise, a triplet of posteriors is feasible only if the prior lies in the convex hull of the three points. 

From this, we see immediately that for each (prior) belief, $\mu_{0}$, $\mu_{1}$, and $\mu_{2}$, it is impossible for none of the equilibrium posterior beliefs to lie in $l$. Even more, at least one of the posterior beliefs cannot lie on the boundary between $l$ and $s$ or the boundary between $l$ and $r$. Hence, there are four cases: i. All of the posteriors lie in $l$, ii. Some of the posteriors lie in $s$ and some in $l$, iii. Some of the posteriors lie in $x$ and some in $l$, or iv. There is one posterior in each of $l$, $s$ and $x$.

However, cases iii and iv are impossible: each type strictly prefers $l$ or $s$ to $x$ and thus some type must have a profitable deviation to a message that induces a posterior in $l$. Moreover, if all of the posteriors lie in $l$, then this yields the same payoff to the receiver as the case in which each type pools and so we may ignore case i.

Consequently, it remains to consider case ii. In addition, note that without loss we may focus on the situation in which there are just two posteriors since if the optimum consisted of three posteriors, two of them must lie in the same region, and we could just take their average, which would also lie in the same region and yield the receiver the same (total) payoff.

As a result, for each (prior) belief $Q$, $Q \in \left\{\mu_{0}, \mu_{1}, \mu_{2}\right\}$, the receiver solves 
\[\max_{\lambda, \mu_{1},\mu_{2}}\left\{\lambda V(\mu_{1}) + (1-\lambda)V(\mu_{2})\right\}\]
subject to \[\lambda \mu_{1} + (1-\lambda) \mu_{2} = Q\]
and $\mu_{1} \in B$, $\mu_{2} \in A$, and $\lambda \in [0,1]$. Substituting in the payoffs, and with the aid of Figure \ref{convex1}, we have
\[\max_{\lambda, \mu_{H}^{1}, \mu_{M}^{1},\mu_{H}^{2}, \mu_{M}^{2}}\left\{\lambda \left(\mu_{H}^1 + \frac{13}{24}(1-\mu_{H}^1)\right) + (1-\lambda)(2\mu_{H}^{2} + \mu_{M}^2)\right\}\]
subject to \[\begin{split}
    \lambda \mu_{H}^{1} + (1-\lambda) \mu_{H}^{2} &= \mu_{H}^{0}, \qquad
    \lambda \mu_{M}^{1} + (1-\lambda) \mu_{M}^{2} = \mu_{M}^{0},\\
    1 &\geq \lambda \geq 0, \qquad
    \mu_{H}^{1} \geq 0,\\
    \frac{13-37\mu_{H}^{1}}{24} &\geq \mu_{M}^{1} \geq \frac{11}{24}(1-\mu_{H}^{1}),\\
    \mu_{M}^{2} &\geq 0, \qquad
    \mu_{H}^{2} + \mu_{H}^{2} \leq 1\\
\end{split}\]
For belief $\mu_{0}$, $\lambda^{*} = 6/13$, and the optimal pair of equilibrium posteriors is \[\label{A111}\tag{$A12$}\left(\frac{13}{24}, \frac{11}{24}, 0\right), \quad \text{and} \quad \left(0, \frac{1}{14}, \frac{13}{14}\right)\] which yields the receiver a payoff of $67/52$. This corresponds to an equilibrium in which $\theta_{H}$ and $\theta_{L}$ choose different messages, say $g$ and $b$, respectively; and $\theta_{M}$ mixes between those messages ($g$ and $b$). 

For belief $\mu_{1}$, $\lambda_{1}^{*} = 2/13$, and the optimal pair of equilibrium posteriors is \[\label{A112}\tag{$A12$}\left(\frac{13}{24}, \frac{11}{24}, 0\right), \quad \text{and} \quad \left(0, \frac{7}{33}, \frac{26}{33}\right)\] which yields the receiver a payoff of $83/52$. This corresponds to the same type of equilibrium as for $\mu_{0}$ (the receiver can distinguish between $\theta_{H}$ and $\theta_{L})$.

For belief $\mu_{2}$, $\lambda_{2}^{*} = 6/11$, and the optimal pair of equilibrium posteriors is \[\label{A113}\tag{$A13$}\left(\frac{13}{24}, \frac{11}{24}, 0\right), \quad \text{and} \quad \left(\frac{4}{15}, 0, \frac{11}{15}\right)\] which yields the receiver a payoff of $127/132$. This corresponds to an equilibrium in which $\theta_{H}$ and $\theta_{M}$ choose different messages, say $g$ and $m$, respectively; and $\theta_{L}$ mixes between those messages ($g$ and $m$).

The pooling equilibrium payoffs for each prior are $5/4$, $19/12$, and $11/12$, for $\mu_{0}$, $\mu_{1}$, and $\mu_{2}$, respectively. Hence, the pooling equilibrium is not optimal for any of the priors, and so the equilibria that correspond to the posterior pairs given in Expressions \ref{A111}, \ref{A112}, and \ref{A113} are optimal for their respective priors. In Lemma 4.4 in Whitmeyer (2019) \cite{Whit}, we determine that these are the maximal payoffs for the receiver at these beliefs under \textit{any} information structure, which corroborates the optimality of these equilibria.

It remains to verify
\[\frac{67}{52} > \frac{2195}{1716} = \frac{1}{2}\cdot \frac{83}{52} + \frac{1}{2}\cdot \frac{127}{132}\]
and thus the receiver's optimal equilibrium payoff is not convex in the prior. Note that this game is a cheap talk game with transparent motives--even these restrictions are not enough to guarantee convexity. 
\end{proof}

\end{document}